\newcommand{\SNIP}[2]{\expandafter\newcommand\csname snippet--#1\endcsname{#2}}
\isactrlbold {\isasymlongrightarrow}{\isacharparenleft}{\kern0pt}a\isactrlbold {\isasymlongrightarrow}c{\isacharparenright}{\kern0pt}\isactrlbold {\isasymlongrightarrow}{\isacharparenleft}{\kern0pt}a\isactrlbold {\isasymlongrightarrow}{\isacharparenleft}{\kern0pt}b\isactrlbold {\isasymand}c{\isacharparenright}{\kern0pt}{\isacharparenright}{\kern0pt}{\isacharparenright}{\kern0pt}%
\isactrlbold {\isasymlongrightarrow}{\isacharparenleft}{\kern0pt}a\isactrlbold {\isasymlongrightarrow}c{\isacharparenright}{\kern0pt}\isactrlbold {\isasymlongrightarrow}{\isacharparenleft}{\kern0pt}a\isactrlbold {\isasymlongrightarrow}{\isacharparenleft}{\kern0pt}b\isactrlbold {\isasymand}c{\isacharparenright}{\kern0pt}{\isacharparenright}{\kern0pt}{\isacharparenright}{\kern0pt}%
\isactrlbold {\isasymlongrightarrow}{\isacharparenleft}{\kern0pt}a\isactrlbold {\isasymlongrightarrow}{\isacharparenleft}{\kern0pt}b\isactrlbold {\isasymand}c{\isacharparenright}{\kern0pt}{\isacharparenright}{\kern0pt}{\isacharparenright}{\kern0pt}%
\isactrlbold {\isasymnot}{\isacharparenleft}{\kern0pt}p\ \isactrlbold {\isasymand}\ q{\isacharparenright}{\kern0pt}{\isacharparenright}{\kern0pt}\ \isactrlbold {\isasymlongrightarrow}\ \isactrlbold {\isasymnot}\ q{\isacharparenright}{\kern0pt}%
\isactrlbold {\isasymnot}{\isacharparenleft}{\kern0pt}p\ \isactrlbold {\isasymand}\ q{\isacharparenright}{\kern0pt}{\isacharparenright}{\kern0pt}\ \isactrlbold {\isasymlongrightarrow}\ \isactrlbold {\isasymnot}\ q{\isacharparenright}{\kern0pt}%
\isactrlbold {\isasymnot}{\isacharparenleft}{\kern0pt}p\ \isactrlbold {\isasymand}\ q{\isacharparenright}{\kern0pt}{\isacharparenright}{\kern0pt}\ \isactrlbold {\isasymlongrightarrow}\ \isactrlbold {\isasymnot}\ q{\isacharparenright}{\kern0pt}{\isacharparenright}{\kern0pt}%
\isactrlbold {\isasymnot}{\isacharparenleft}{\kern0pt}p\ \isactrlbold {\isasymand}\ q{\isacharparenright}{\kern0pt}{\isacharparenright}{\kern0pt}\ \isactrlbold {\isasymlongrightarrow}\ \isactrlbold {\isasymnot}\ q{\isacharparenright}{\kern0pt}\ \isactrlbold {\isasymlongrightarrow}\ K\ i\ {\isacharparenleft}{\kern0pt}\isactrlbold {\isasymnot}{\isacharparenleft}{\kern0pt}p\ \isactrlbold {\isasymand}\ q{\isacharparenright}{\kern0pt}{\isacharparenright}{\kern0pt}\ \isactrlbold {\isasymlongrightarrow}\ K\ i\ {\isacharparenleft}{\kern0pt}\isactrlbold {\isasymnot}\ q{\isacharparenright}{\kern0pt}{\isacharparenright}{\kern0pt}%
\isactrlbold {\isasymnot}{\isacharparenleft}{\kern0pt}p\ \isactrlbold {\isasymand}\ q{\isacharparenright}{\kern0pt}{\isacharparenright}{\kern0pt}\ \isactrlbold {\isasymlongrightarrow}\ K\ i\ {\isacharparenleft}{\kern0pt}\isactrlbold {\isasymnot}\ q{\isacharparenright}{\kern0pt}{\isacharparenright}{\kern0pt}%
\isactrlbold {\isasymlongrightarrow}{\isacharparenleft}{\kern0pt}imply\ G\ p{\isacharparenright}{\kern0pt}{\isacharparenright}{\kern0pt}%
\isactrlbold {\isasymlongrightarrow}{\isacharparenleft}{\kern0pt}{\isacharparenleft}{\kern0pt}K\ i\ {\isacharparenleft}{\kern0pt}p\ \isactrlbold {\isasymand}\ q{\isacharparenright}{\kern0pt}{\isacharparenright}{\kern0pt}\ \isactrlbold {\isasymlongrightarrow}\ r{\isacharparenright}{\kern0pt}{\isacharparenright}{\kern0pt}%
\isactrlbold {\isasymnot}{\isacharparenleft}{\kern0pt}{\isacharparenleft}{\kern0pt}K\ i\ {\isacharparenleft}{\kern0pt}p\ \isactrlbold {\isasymand}\ q{\isacharparenright}{\kern0pt}{\isacharparenright}{\kern0pt}\ \isactrlbold {\isasymlongrightarrow}\ {\isacharparenleft}{\kern0pt}{\isacharparenleft}{\kern0pt}K\ i\ p{\isacharparenright}{\kern0pt}\ \isactrlbold {\isasymand}\ {\isacharparenleft}{\kern0pt}K\ i\ q{\isacharparenright}{\kern0pt}{\isacharparenright}{\kern0pt}{\isacharparenright}{\kern0pt}{\isacharbraceright}{\kern0pt}%
\isactrlbold {\isasymnot}{\isacharparenleft}{\kern0pt}{\isacharparenleft}{\kern0pt}K\ i\ {\isacharparenleft}{\kern0pt}p\ \isactrlbold {\isasymand}\ q{\isacharparenright}{\kern0pt}{\isacharparenright}{\kern0pt}\ \isactrlbold {\isasymlongrightarrow}\ {\isacharparenleft}{\kern0pt}{\isacharparenleft}{\kern0pt}K\ i\ p{\isacharparenright}{\kern0pt}\ \isactrlbold {\isasymand}\ {\isacharparenleft}{\kern0pt}K\ i\ q{\isacharparenright}{\kern0pt}{\isacharparenright}{\kern0pt}{\isacharparenright}{\kern0pt}{\isacharbraceright}{\kern0pt}\ from{\isacharunderscore}{\kern0pt}nat%
\isactrlbold {\isasymnot}{\isacharparenleft}{\kern0pt}{\isacharparenleft}{\kern0pt}K\ i\ {\isacharparenleft}{\kern0pt}p\ \isactrlbold {\isasymand}\ q{\isacharparenright}{\kern0pt}{\isacharparenright}{\kern0pt}\ \isactrlbold {\isasymlongrightarrow}\ {\isacharparenleft}{\kern0pt}{\isacharparenleft}{\kern0pt}K\ i\ p{\isacharparenright}{\kern0pt}\ \isactrlbold {\isasymand}\ {\isacharparenleft}{\kern0pt}K\ i\ q{\isacharparenright}{\kern0pt}{\isacharparenright}{\kern0pt}{\isacharparenright}{\kern0pt}%
\isactrlbold {\isasymnot}{\isacharparenleft}{\kern0pt}{\isacharparenleft}{\kern0pt}K\ i\ p{\isacharparenright}{\kern0pt}\ \isactrlbold {\isasymand}\ {\isacharparenleft}{\kern0pt}K\ i\ q{\isacharparenright}{\kern0pt}{\isacharparenright}{\kern0pt}{\isacharparenright}{\kern0pt}%
\isactrlbold {\isasymnot}{\isacharparenleft}{\kern0pt}K\ i\ p{\isacharparenright}{\kern0pt}{\isacharparenright}{\kern0pt}\ {\isasymor}\ {\isacharparenleft}{\kern0pt}{\isacharquery}{\kern0pt}M{\isacharcomma}{\kern0pt}\ {\isacharquery}{\kern0pt}V\ {\isasymTurnstile}\ \isactrlbold {\isasymnot}{\isacharparenleft}{\kern0pt}K\ i\ q{\isacharparenright}{\kern0pt}{\isacharparenright}{\kern0pt}%
\isactrlbold {\isasymbottom}{\isacharparenright}{\kern0pt}%
\isactrlbold {\isasymbottom}{\isacharparenright}{\kern0pt}{\isacharparenright}{\kern0pt}%
\isactrlbold {\isasymbottom}{\isacharparenright}{\kern0pt}\ \isactrlbold {\isasymlongrightarrow}\ {\isacharparenleft}{\kern0pt}{\isacharquery}{\kern0pt}q\ \isactrlbold {\isasymlongrightarrow}\ \isactrlbold {\isasymbottom}{\isacharparenright}{\kern0pt}{\isacharparenright}{\kern0pt}%
\isactrlbold {\isasymbottom}{\isacharparenright}{\kern0pt}\ \isactrlbold {\isasymlongrightarrow}\ {\isacharparenleft}{\kern0pt}{\isacharquery}{\kern0pt}q\ \isactrlbold {\isasymlongrightarrow}\ \isactrlbold {\isasymbottom}{\isacharparenright}{\kern0pt}{\isacharparenright}{\kern0pt}%
\isactrlbold {\isasymbottom}{\isacharparenright}{\kern0pt}{\isacharparenright}{\kern0pt}%
\isactrlbold {\isasymbottom}{\isacharparenright}{\kern0pt}%
\isactrlbold {\isasymbottom}{\isacharparenright}{\kern0pt}%
\isactrlbold {\isasymbottom}{\isacharparenright}{\kern0pt}{\isacharparenright}{\kern0pt}{\isacharparenright}{\kern0pt}%
\isactrlbold {\isasymbottom}{\isacharparenright}{\kern0pt}{\isacharparenright}{\kern0pt}{\isacharparenright}{\kern0pt}\ {\isasymin}\ V%
\isactrlbold {\isasymbottom}{\isacharparenright}{\kern0pt}{\isacharparenright}{\kern0pt}{\isacharparenright}{\kern0pt}\ {\isasymin}\ X{\isadigit{1}}%
\isactrlbold {\isasymbottom}{\isacharparenright}{\kern0pt}\isactrlbold {\isasymlongrightarrow}\isactrlbold {\isasymnot}{\isacharparenleft}{\kern0pt}y\isactrlbold {\isasymand}x{\isacharparenright}{\kern0pt}{\isacharparenright}{\kern0pt}%
\isactrlbold {\isasymand}{\isacharquery}{\kern0pt}q{\isacharparenright}{\kern0pt}\isactrlbold {\isasymlongrightarrow}\isactrlbold {\isasymbottom}{\isacharparenright}{\kern0pt}\isactrlbold {\isasymlongrightarrow}\isactrlbold {\isasymnot}{\isacharparenleft}{\kern0pt}{\isacharquery}{\kern0pt}q\isactrlbold {\isasymand}{\isacharquery}{\kern0pt}p{\isacharparenright}{\kern0pt}{\isacharparenright}{\kern0pt}\ {\isasymin}\ X{\isadigit{1}}%
\isactrlbold {\isasymand}{\isacharquery}{\kern0pt}q{\isacharparenright}{\kern0pt}\isactrlbold {\isasymlongrightarrow}\isactrlbold {\isasymbottom}{\isacharparenright}{\kern0pt}\isactrlbold {\isasymlongrightarrow}\isactrlbold {\isasymnot}{\isacharparenleft}{\kern0pt}{\isacharquery}{\kern0pt}q\isactrlbold {\isasymand}{\isacharquery}{\kern0pt}p{\isacharparenright}{\kern0pt}{\isacharparenright}{\kern0pt}\ {\isasymin}\ X{\isadigit{1}}%
\isactrlbold {\isasymand}{\isacharquery}{\kern0pt}q{\isacharparenright}{\kern0pt}\isactrlbold {\isasymlongrightarrow}\isactrlbold {\isasymbottom}{\isacharparenright}{\kern0pt}\isactrlbold {\isasymlongrightarrow}\ K\ i\ {\isacharparenleft}{\kern0pt}\isactrlbold {\isasymnot}{\isacharparenleft}{\kern0pt}{\isacharquery}{\kern0pt}q\isactrlbold {\isasymand}{\isacharquery}{\kern0pt}p{\isacharparenright}{\kern0pt}{\isacharparenright}{\kern0pt}{\isacharparenright}{\kern0pt}\ {\isasymin}\ X{\isadigit{1}}%
\isactrlbold {\isasymand}{\isacharquery}{\kern0pt}q{\isacharparenright}{\kern0pt}\isactrlbold {\isasymlongrightarrow}\isactrlbold {\isasymbottom}{\isacharparenright}{\kern0pt}\isactrlbold {\isasymlongrightarrow}\ {\isacharparenleft}{\kern0pt}\isactrlbold {\isasymnot}\ L\ i{\isacharparenleft}{\kern0pt}{\isacharquery}{\kern0pt}q\isactrlbold {\isasymand}{\isacharquery}{\kern0pt}p{\isacharparenright}{\kern0pt}{\isacharparenright}{\kern0pt}{\isacharparenright}{\kern0pt}\ {\isasymin}\ X{\isadigit{1}}%
\isactrlbold {\isasymand}{\isacharquery}{\kern0pt}p{\isacharparenright}{\kern0pt}{\isacharparenright}{\kern0pt}\ {\isasymin}\ X{\isadigit{1}}\ {\isasymand}\ {\isacharparenleft}{\kern0pt}L\ i{\isacharparenleft}{\kern0pt}{\isacharquery}{\kern0pt}q\isactrlbold {\isasymand}{\isacharquery}{\kern0pt}p{\isacharparenright}{\kern0pt}{\isacharparenright}{\kern0pt}\ {\isasymin}\ X{\isadigit{1}}%
\isactrlsub {\isacharunderscore}{\kern0pt}\isactrlsub {\isadigit{2}}{\isacharunderscore}{\kern0pt}weakly{\isacharunderscore}{\kern0pt}directed{\isacharcolon}{\kern0pt}\isanewline%
\isactrlsub {\isacharunderscore}{\kern0pt}\isactrlsub {\isadigit{2}}{\isacharunderscore}{\kern0pt}weakly{\isacharunderscore}{\kern0pt}directed\ {\isacharbrackleft}{\kern0pt}\isakeyword{where}\ A{\isacharequal}{\kern0pt}Ax{\isacharunderscore}{\kern0pt}{\isadigit{2}}{\isacharbrackright}{\kern0pt}\ \isacommand{by}\isamarkupfalse%
\isactrlsub {\isacharunderscore}{\kern0pt}\isactrlsub {\isadigit{2}}{\isacharunderscore}{\kern0pt}weakly{\isacharunderscore}{\kern0pt}directed\ mcs\isactrlsub K\isactrlsub {\isadigit{4}}{\isacharunderscore}{\kern0pt}transitive{\isacharparenright}{\kern0pt}\isanewline%
\isactrlbold {\isasymlongleftrightarrow}{\isachardoublequoteclose}\ {\isadigit{2}}{\isadigit{5}}{\isacharparenright}{\kern0pt}\ \isakeyword{where}\isanewline%
\isactrlbold {\isasymbottom}{\isacharparenright}{\kern0pt}\ {\isacharequal}{\kern0pt}\ False%
\newcommand{\GetSnip}[1]{%
    \ifcsname snippet--#1\endcsname%
        \csname snippet--#1\endcsname%
    \else%
        \PackageWarning{snips}{Snippet ``#1'' is undefined.}%
        \emph{Warning: Snippet ``#1'' is undefined.}%
    \fi%
}
\newcommand{\RawCartouche}[3]{\GetSnip{#1-#2-#3}}
\newcommand{\Cartouche}[3]{%
    {\isacartoucheopen}%
    \RawCartouche{#1}{#2}{#3}%
    {\isacartoucheclose}%
}
\newcommand{\Snippet}[1]{{%
  \newcount\i
  \i=0
  \loop
    \GetSnip{#1-\the\i}
    \advance \i 1
  \ifcsname snippet--#1-\the\i\endcsname
  \repeat
}}
\newcommand{\SnippetPart}[3]{{%
  \newcount\i
  \i=#1
  \loop
    \ifnum \i=#2
      \renewcommand{\isanewline}{}%
    \fi
    \GetSnip{#3-\the\i}
    \advance \i 1
    \ifnum \i>#2 {}
    \else \repeat
}}
\def\isadelimdocument{}\def\endisadelimdocument{}
\def\isatagdocument{}
\def\isadelimproof{}\def\endisadelimproof{}
\def\isatagproof{}\def\endisatagproof{}
\def\isafoldproof{}
\tikzset{>=latex}
\newtheorem{theorem}{Theorem}
\newtheorem{proposition}[theorem]{Proposition}
\newtheorem{lemma}[theorem]{Lemma}
\theoremstyle{definition}
\theoremstyle{remark}
\title{Stalnaker's Epistemic Logic in Isabelle/HOL.\thanks{Supported by NSF CAREER Award CNS-1552934 and NSF:CCRI Award CNS-2016592.}}
\author{Laura P. Gamboa Guzman
\institute{Iowa State University \\ Ames, IA 50011, USA 
\email{lpgamboa@iastate.edu}}
 \and
Kristin Y. Rozier
\institute{Iowa State University \\ Ames, IA 50011, USA 
\email{kyrozier@iastate.edu}}
}
\begin{document}
\maketitle

\begin{abstract} 
The foundations of formal models for epistemic and doxastic logics often rely on certain logical aspects of modal logics such as S4 and S4.2 and their semantics; however, the corresponding mathematical results are often stated in papers or books without including a detailed proof, or a reference to it, that allows the reader to convince themselves about them. We reinforce the foundations of the epistemic logic S4.2 for countably many agents by formalizing its soundness and completeness results for the class of all weakly-directed pre-orders in the proof assistant Isabelle/HOL. This logic corresponds to the knowledge fragment, i.e., the logic for formulas that may only include knowledge modalities in Stalnaker's system for knowledge and belief. Additionally, we formalize the equivalence between two axiomatizations for S4, which are used depending on the type of semantics given to the modal operators, as one is commonly used for the relational semantics, and the other one arises naturally from the topological semantics.

\end{abstract}

\section{Introduction}

%


Epistemic logics are a family of logics that allow us to reason about knowledge among a group of agents, as well as their knowledge about other's knowledge\cite{Fagin1995-FAGRAK-2}. Reasoning about knowledge is useful for detecting and identifying faults during the operation of complex critical systems \cite{bozzano2014formal,tonetta2015formal}, where important safety properties are formalized using a modal language that combines temporal, in particular, LTL (Linear Temporal Logic), and epistemic modal operators, so to verify the correctness of the system using model checking and related formal fault-detection techniques \cite{cimatti2016lazy,HOL_TLA,LTL_AFP}.

When it comes to modal logics for knowledge, most of these logics correspond to normal logics between S4 and S5 \cite{VanDitmarsch2008,Stalnaker2006}. In particular, we consider Stalnaker's epistemic logic, which coincides with the logic S4.2. It is known that this logic strictly stronger than S4, but weaker than S5 \cite{chagrov1997modal}. This logic is known to be sound and complete with respect to all weakly directed S4-frames, that is, all frames consisting of reflexive and transitive binary relations that are confluent \cite{sep-logic-epistemic}, but this proof is often omitted in textbooks where most extensions to system K (the weakest normal modal logic) are usually treated informally.

Additionally, we encode in Isabelle/HOL the axiomatization of S4 obtained from the study of the topological interpretation for modal languages, which was introduced prior to the relational one that is more commonly found in the literature. This topological interpretation is done by reading the modal necessity operator as an interior operator on a topological space, for which is known that the modal logic S4 is complete with respect to all topological spaces \cite{Aiello2003}. The preferred axiomatization for the logic of topological spaces differs from the one presented in \cite{From2021}, not only from the set of axioms, but also the deductive rules, since it captures the axioms for an interior operator instead of a reflexive and transitive binary relation. As a consequence, this makes the topological axiomatization not directly recognizable as a normal modal logic, since the deduction rules seem to be weaker at first glance. Since several authors have been recently developing topological semantics for notions of knowledge and belief \cite{2019TopoEvidMulti,BaltagEtAl2016JustTopoBel,BaltagOzgun2013topobelief,GamboaG_2021}, we provide a formalization for this result, which often gets briefly mentioned and applied without being proved in detail.

\subsection*{Contributions}

We formalize Stalnaker's epistemic logic, which is expressively equivalent to S4.2 \cite{Stalnaker2006}, as well as some intermediate results for the underlying propositional logic and the modal logics K, .2, and S4 mainly regarding rewriting rules, properties for maximal consistent sets of formulas, and frame properties that are induced by the chosen set of axioms in the proof assistant Isabelle/HOL \cite{Stalnaker_Logic-AFP}. Our main result is a formalization of the soundness and completeness of Stalnaker's epistemic logic (restricted to countably many agents) with respect to all weakly directed(also refered to as \emph{confluent} or \emph{convergent} in the literature \cite{DegreesIgnoranceMuCalculus,sep-logic-epistemic}) S4 frames, this is, all frames consisting of a non-empty set $W$ and a binary relation $R_i$ on $W$, one for each agent label $i$, that is reflexive, transitive, and that satisfies the property described by the following condition
\[ \forall x\, \forall y\, \forall z\, ( x R_i y \wedge x R_i z \implies \exists w\, y R_i w \wedge z R_i w ). \]
The proof uses a Henkin-style completeness method, which is commonly used for these kinds of logics \cite{Blackburn2001} and was already available on Isabelle's Archive of Formal Proofs \cite{EpistemicLogicAFP}.

As far as we know, all systems corresponding to some multi-agent epistemic logic already formalized in Isabelle/HOL, which are all contained in \cite{EpistemicLogicAFP} (the ground base for our formalization), were complete with respect to a class of frames characterized by a universal formula, i.e., a property of frames given by a first order formula of the form $\forall \overline{x}\,\phi(\overline{x})$, where $\phi(\overline{x})$ is a quantifier-free formula with variable symbols in $\overline{x}=(x_1, \ldots,x_n)$. However, the logic S4.2 is complete with respect to a class of frames that cannot be characterized using a universal formula; instead it is characterized by a universal-existential formula. This universal-existential characterization makes it harder to formalize its completeness result, since one has to show the existence of an object in the universe of the canonical model satisfying a condition on a union of consistent sets of formulas. For this, we followed an argument given by Stalnaker in \cite{lectureNotesStalnaker} that includes a set of theorems that are consequences of the axiom (.2) in K, which imply the consistency of a set obtained by taking the union of all known facts for an agent in two different worlds that were accessible from a third one.

Nonetheless, our formalization also includes some intermediate results that are well-known for all normal modal logics, and that are commonly used when dealing with formal proofs in Hilbert-style systems. Finally, we formalized the equivalence between two of the most used axiomatizations of S4, the one presented in \cite{From2021} which is commonly used when dealing with the relational semantics \cite{Blackburn2001}, and the one introduced by McKinsey and Tarski for the topological semantics \cite{Aiello2003}.

\subsection*{Related work}
Our ground base is the Isabelle/HOL theory \texttt{EpistemicLogic.thy} \cite{EpistemicLogicAFP}, which contains not only the formalization of other epistemic logics such as S4 and S5, but also formalizes the definition of an abstract canonical model, as well as a simple and convenient way to work with any desired normal modal logic by adding necessary the axioms to the basic system K \cite{From2021}. A related paper to this formalization is \cite{Villadsen2022}, which contains a broad and updated summary on formalizations of logical systems and correspondent important results using different theorem provers. Other ways to formalize logical systems and their completeness results on Isabelle have also been studied in \cite{dionisio2005defining} and \cite{blanchette2017coinductive}, which include (but are not limited to) the use of natural deduction rules, sequent-style rules, and tableau rules for the formal systems, and coinductive methods for soundness and completeness results. 

However, given the already existent formalization of LTL in Isabelle/HOL \cite{LTL_AFP}, as well as the prevalence of Isabelle as a tool for formal verification of safety requirements for critical systems, it becomes important to provide this formalization for this particular proof assistant. In addition to this, in \cite{KZ2011} the authors defined and investigated the notion of $KB_R$-structures, which are used to represent a description of the epistemic status of a rational agent that is not necessarily aware of their ignorance, and provided a result that matches them with models of the epistemic logic S4.2. Modal logics between S4.2 and S5 are of special interest for applications in epistemic logic, since they allow formalizations of several degrees of ignorance for each one of the agents \cite{DegreesIgnoranceMuCalculus}.

The paper is organized as follows: Section \ref{sec_background} introduces the necessary background on epistemic logic, including its relational and topological semantics, and how the syntax and the relational semantics were formalized in Isabelle/HOL in \cite{EpistemicLogicAFP}. Section \ref{sec_formalization} explains our formalization of Stalnaker's epistemic logic, including the intermediate results necessary to prove the main results, and the limitations of these to only countably many agents. Section \ref{sec_topoS4} explains our formalization of the equivalence between the two most common axiomatizations for S4, the one that arises from the topological semantics, and the one commonly used when working with the relational semantics. Finally, in Section \ref{sec_results_future_work}, we conclude with a discussion about the results, limitations, and future work.

\section{Background} \label{sec_background} 
\subsection{Stalnaker's Epistemic logic}
We briefly present the axiomatic system developed by R. Stalnaker for both notions of knowledge and belief, as well as the main result for the ``knowledge formulas'' (i.e., for those formulas that do not contain any belief modal operators), which correspond to the multimodal system S4.2 \cite{Stalnaker2006}. We omit the proof for this result, as the Isabelle theory ``Epistemic logic: Completeness of Modal Logics'' \cite{EpistemicLogicAFP} does not support belief formulas.

Consider the well-formed formulas obtained from the following grammar, where $x$ ranges over the set of propositional symbols and $i$ ranges over the set of agent labels:
\[ \phi, \psi ::= \bot \,|\, x \,|\, \phi \vee \psi \,|\, \phi \wedge \psi \,|\, \phi \to \psi \,|\, K_i \phi \,|\, B_i \phi. \]
The operators $K_i$ and $B_i$ mean ``agent $i$ knows'' and ``agent $i$ believes,'' respectively. Although Stalnaker does not present his logic of knowledge and belief using this exact set of propositional connectives, but a proper subset of these, we added the remaining ones given that From's formalization includes all of them \cite{From2021}.

Stalnaker's principles (axioms) for knowledge and belief appear in Table \ref{tab_axioms_Stal}, along with their interpretations in natural language. \textit{Stalnaker's logic for knowledge and belief} corresponds to the formal system obtained by adding these axioms to the axioms and rules of the multi-modal logic S4, that is, the smallest logic containing the following axioms:
\begin{itemize}
    \item all propositional tautologies,
    \item axiom K: $(K_i(\phi \to \psi) \wedge K_i \phi) \to K_i \psi$,
    \item axiom T: $K_i \phi \to \phi$, and
    \item axiom 4: $K_i \phi \to K_i K_i \phi$;
\end{itemize}
and that is closed under Modus Ponens and the Necessitation rule, ``from $\phi$ infer $K_i \phi$'', where $i$ ranges over the set of agents.
\begin{table}[h]
\begin{center}
\caption{Axioms for knowledge and belief.}
\label{tab_axioms_Stal}
\begin{tabular}{ll} \hline
	$B_i \phi \to K_i B_i \phi$ & Positive introspection \\
	$\neg B_i \phi \to K_i \neg B_i \phi$ & Negative introspection \\
	$K_i \phi \to B_i \phi$ & Knowledge implies belief \\
	$B_i \phi \to \neg B_i \neg \phi$ & Consistency of belief \\
	$B_i \phi \to B_i K_i \phi$ & Strong belief \\ \hline
\end{tabular}
\end{center}
\end{table}

The following proposition summarizes some relevant properties of this logic.

\begin{proposition} \label{prop_knowledge_belief_logics}
The following are some key properties of Stalnaker's logic for knowledge and belief \cite{Stalnaker2006}.
\begin{enumerate}
  \item The following equivalences, one for each agent label $i$ and formula $\phi$, are theorems in this logic: \[ B_i \phi \longleftrightarrow \neg K_i \neg K_i \phi. \] 
  \item As a consequence of the previous property, by replacing `$B_i$' with `$\neg K_i \neg K_i$' in the \emph{Consistency of belief} axiom, we get that $\neg K_i \neg K_i \phi \to K_i \neg K_i \neg \phi$ (also known as axiom .2) is a theorem in this logic. This implies that the knowledge formulas of this logic correspond exactly to the logic given by the system S4.2, i.e., those that can be obtained from the rules and axioms of the multi-modal logic S4 in presence of the axiom .2.
\end{enumerate}
\end{proposition}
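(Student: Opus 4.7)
The plan is to establish Part 1 first by a pair of axiomatic derivations in the Hilbert-style system, and then derive Part 2 as a near-immediate consequence. For the forward direction $B_i \phi \to \neg K_i \neg K_i \phi$, I would assume $B_i \phi$, apply the \emph{Strong belief} axiom to obtain $B_i K_i \phi$, then instantiate \emph{Consistency of belief} at $K_i \phi$ to obtain $\neg B_i \neg K_i \phi$, and finally apply the contrapositive of \emph{Knowledge implies belief} (instantiated at $\neg K_i \phi$) to conclude $\neg K_i \neg K_i \phi$.

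For the converse direction, I would argue by contraposition: assuming $\neg B_i \phi$, I aim to derive $K_i \neg K_i \phi$. First, \emph{Negative introspection} gives $K_i \neg B_i \phi$. Separately, the contrapositive of \emph{Knowledge implies belief}, namely $\neg B_i \phi \to \neg K_i \phi$, is a theorem by propositional reasoning; applying the Necessitation rule and axiom K under the $K_i$ modality yields $K_i \neg B_i \phi \to K_i \neg K_i \phi$, and combining the two gives $K_i \neg K_i \phi$ as required. Together the two directions establish the biconditional.

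For Part 2, once the equivalence $B_i \phi \leftrightarrow \neg K_i \neg K_i \phi$ is available, I would instantiate \emph{Consistency of belief} at $\phi$ to get $B_i \phi \to \neg B_i \neg \phi$, rewrite both belief modalities using the equivalence, and obtain $\neg K_i \neg K_i \phi \to \neg \neg K_i \neg K_i \neg \phi$, which is propositionally equivalent to axiom .2. To conclude that the knowledge fragment coincides with S4.2, I would need the two inclusions: every theorem in S4.2 is a theorem in Stalnaker's system because the S4 axioms are already present and .2 has just been derived; conversely, any derivation of a knowledge-only formula in Stalnaker's system must correspond to one in S4.2, which can be argued by a conservativity argument via semantics or by inspection of the admissible proof rules on belief-free formulas.

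The main obstacle I anticipate is the substitution step in Part 2: replacing subformulas of the shape $B_i \psi$ by $\neg K_i \neg K_i \psi$ inside an arbitrary context requires the replacement theorem for modal logic, which in a strict Hilbert formalization is proved by induction on formula structure and uses Necessitation and axiom K at each modal step. A cleaner route, and the one I would actually carry out, is to avoid general substitution entirely for Part 2 by instantiating \emph{Consistency of belief} at the specific formula $\phi$ and then applying the already-proved biconditional only at its two distinguished subformulas $B_i \phi$ and $B_i \neg \phi$, so that only two uses of the equivalence (and ordinary propositional reasoning) are needed.
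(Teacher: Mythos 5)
The paper itself gives no proof of this proposition: it is stated as background, attributed to \cite{Stalnaker2006}, and the text explicitly says the proof is omitted because the underlying Isabelle theory has no belief modality. So there is nothing in the paper to compare against line by line; judged on its own, your argument for Part~1 is correct and is essentially the standard derivation. The forward chain (Strong belief, then Consistency of belief at $K_i\phi$, then the contrapositive of Knowledge-implies-belief at $\neg K_i\phi$) and the converse chain (Negative introspection, plus Necessitation and axiom K applied to the contrapositive of Knowledge-implies-belief) are both sound, and you correctly keep all reasoning at the theorem level so that Necessitation is never applied under an undischarged assumption. Your derivation of axiom .2 in Part~2 is also correct, and your decision to apply the biconditional only at the two displayed occurrences $B_i\phi$ and $B_i\neg\phi$ rather than invoking a general replacement theorem is exactly the right way to keep that step elementary.

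The one place where your proposal is genuinely thinner than the claim requires is the final sentence of Part~2, that the knowledge fragment is \emph{exactly} S4.2. The inclusion S4.2 $\subseteq$ Stalnaker is immediate once .2 is derived, as you say; the converse (a belief-free theorem of Stalnaker's system is already an S4.2 theorem, even though its derivation may pass through belief formulas) is the nontrivial half, and you only name two possible strategies without carrying either out. The standard syntactic route is to translate every $B_i\psi$ to $\neg K_i\neg K_i\psi$ throughout a derivation and check that the translations of all five belief axioms are S4.2 theorems; this is not entirely free, e.g.\ translated Strong belief becomes $\neg K_i\neg K_i\phi \to \neg K_i\neg K_i K_i\phi$, which needs the S4 equivalence $K_iK_i\phi \leftrightarrow K_i\phi$ propagated under a $K_i$, and translated Positive introspection needs an instance of .2 at $K_i\phi$ combined with that same equivalence. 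The semantic route via completeness of S4.2 for weakly directed preorders would be circular in the context of this paper, since that completeness theorem is the paper's main result. Since the proposition is imported background rather than something the paper proves, this gap does not affect the paper, but a self-contained proof would need the translation argument spelled out.
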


The above proposition allows us to interpret this logic by giving a semantics only for the propositional variables, Boolean connectives and knowledge operators. Formally, we use structures $\mathfrak{M} = (\mathcal{F},\pi)$ known as Kripke models, where the frame $\mathcal{F} = (W, (R_i)_{i})$ is a pair consisting of a non-empty set of worlds $W$, a set of binary accessibility relations $R_i \subseteq W \times W$, one for each agent $i$, and $\pi: \operatorname{Var} \to 2^W$ is a valuation of propositional symbols. Formula satisfiability at a given world $w \in W$ is defined as follows:
\begin{table}[!h]
\begin{tabular}{lcl}
	$\mathfrak{M}, w \not\models \bot$ & & \\
	$\mathfrak{M},w \models x$ & iff & $w \in \pi(x)$ \\
	$\mathfrak{M},w \models \phi \vee \psi$ & iff & $\mathfrak{M},w \models \phi \text{ or } \mathfrak{M},w \models \psi$ \\
	$\mathfrak{M},w \models \phi \wedge \psi$ & iff & $\mathfrak{M},w \models \phi \text{ and } \mathfrak{M},w \models \psi$ \\
	$\mathfrak{M},w \models \phi \to \psi$ & iff & $\mathfrak{M},w \not\models \phi \text{ or } \mathfrak{M},w \models \psi$ \\
	$\mathfrak{M},w \models K_i \phi$ & iff & $\forall v \in W\, (wR_{i}v \to \mathfrak{M},v \models \phi)$ \\
	$\mathfrak{M},w \models B_i \phi$ & iff & $\mathfrak{M},w \models \neg K_i \neg K_i \phi$.
\end{tabular}
\vspace{-10pt}
\end{table}

One can use functions $\mathcal{K} : W \to 2^W$ instead of sets of ordered pairs $R \subseteq W \times W$, as there is a correspondence between these objects by setting \[ w R v \, \Longleftrightarrow v \in \mathcal{K} (w), \] for all $w,v \in W$.

\subsection{Epistemic Logic: Completeness of Modal Logics}

The ``Epistemic Logic: Completeness of Modal Logics'' entry on Isabelle's AFP \cite{EpistemicLogicAFP} contains not only a formalization for the completeness results for some epistemic logics, but also a formalization of the general strategy for Henkin-style proofs for completeness. This is what enabled us to formalize a proof for the completeness result for S4.2. We show here the formalization of the Kripke models from \cite{EpistemicLogicAFP}, which are structures consisting of a set of worlds of type $'w$, a truth assignment for each propositional variable on each world given by the function denoted $\pi$, and a set of accessible worlds from each possible world for each agent $'i$.
\begin{isabelle}
  \SnippetPart{0}{1 }{datatype:kripke}
\end{isabelle}

Consequently, given a Kripke model $\mathfrak{M} = (W, (\mathcal{K}_i)_{i\in I}, \pi)$ with accessibility functions $\mathcal{K}_i$ for each agent $i$, formula satisfiability is defined by setting
\[ \mathfrak{M},w \models K_i \phi \quad \text{iff} \quad \forall v \in \mathcal{K}_i(w)\, (\mathfrak{M},v \models \phi). \]

Additionally, the \emph{dual} operator for each knowledge operator $K_i$ is denoted in this formalization as $L_i$ and is defined as a short hand for ``agent $i$ does not know if something is false.'' In other words, $L_i \phi := \neg K_i (\neg \phi)$, for all formulas $\phi$. The Kripke semantics for this operator corresponds to \cite{Blackburn2001,chagrov1997modal}
\[ \mathfrak{M},w \models L_i \phi \quad \text{iff} \quad \exists v \in \mathcal{K}_i(w)\, (\mathfrak{M},v \models \phi). \]

We show here the corresponding formalization presented in \cite{From2021} for the Kripke semantics, which is defined inductively on formulas for each world.
\begin{isabelle}
  \Snippet{primrec:semantics}
\end{isabelle}

From's formalization then focuses on proving the soundness and completeness results for each of the most commonly found \textit{normal modal logics} in the literature concerning certain \textit{classes of frames} \cite{Blackburn2001,chagrov1997modal,From2021}. We now summarize the relevant ones for our formalization.

\begin{enumerate}
    \item The basic logic, K, whose corresponding axiomatic system consists of all propositional tautologies and the axiom K, and is closed under Modus Ponens and the Necessitation Rule, is sound and complete with respect to the class of all frames.

    \item The logic S4 is sound and complete with respect to the class of all transitive and reflexive frames.

\end{enumerate}

Notice that From's formalization does not include modal operators for belief, this restricts us to the knowledge fragment of the language. However, Proposition \ref{prop_knowledge_belief_logics} tells us that belief is equivalent to knowledge, thus we do not lose any information by restricting to the knowledge fragment.

\subsection{Topological semantics and its axioms}

The topological semantics for modal logics was introduced before the relational semantics that presently dominate the field \cite{Aiello2003}, and the first semantics completeness proof for S4 was derives from there. Recall the notion of this topological semantics for a language with a single modal operator $\square$. Let $\mathcal{L}$ be the language composed of all formulas given by the following grammar:
\[ \phi,\psi ::=\, x \,|\, \neg \phi \,|\, \phi \wedge \psi \,|\, \phi \vee \psi \,|\, \phi \to \psi \,|\, \square \phi, \]
where $x$ ranges over the set of propositional symbols $\operatorname{Var}$. Formulas in $\mathcal{L}$ are interpreted in a topological model $M = \langle W, \tau, v \rangle$, consisting of a non-empty set $W$, a topology $\tau$ over $W$, and a valuation $v: \operatorname{Var} \to 2^W$ in the following way:
\begin{itemize}
    \item $M,w \models x$ iff $x \in v(x)$;
    \item $M,w \models \neg \phi$ iff $M,x \not\models \phi$;
    \item $M,w \models \phi \wedge \psi$ iff $M,x \models \phi$ and $M,x \models \psi$;
    \item $M,w \models \phi \vee \psi$ iff $M,x \models \phi$ or $M,x \models \psi$;
    \item $M,w \models \phi \to \psi$ iff $M,x \not\models \phi$ or $M,x \models \psi$;
    \item $M,w \models \square \phi$ iff there exists $U\in \tau$ such that $w\in U$ and $M,y \models \phi$  for all $y\in U$.
\end{itemize}

Although there is nothing inherently wrong with using the deductive system presented in \cite{From2021} for the logic S4, the following axiomatization is often preferable when working with the topological semantics, as the meaning of the axioms and rules under this semantics resembles some well-known properties of topological spaces \cite{Aiello2003}.

\begin{table}[ht]
\begin{center}
\caption{Topological S4 axioms and rules.}
\label{tab_axioms_TopoS4}
\begin{tabular}{c|l|c|l}
\hline
Axiom & Formula & Rule               & Formula                  \\ \hline
N     & $\square \top$  & \multirow{2}{*}{MP} & \multirow{2}{*}{$\displaystyle \frac{\phi \to \psi \quad \phi}{\psi}$} \\ \cline{1-2}
R     & $\square (\phi \wedge \psi) \leftrightarrow (\square \phi \wedge \square \psi)$ &   &                   \\ \hline
T     & $\square \phi \to \psi$ & \multirow{2}{*}{M}  & \multirow{2}{*}{$\displaystyle \frac{\phi \to \psi}{\square \phi \to \square \psi}$}  \\ \cline{1-2}
4     & $\square \phi \to \square \square \phi$ &                     &                   \\ \hline
\end{tabular}
\end{center}
\end{table}

Notice that at first it is not obvious weather or not the logic obtained from this axiomatization is a normal modal logic, often defined as a logic that \emph{extends} system K \cite{Blackburn2001}, as neither axiom K nor the Necessitation Rule are present in the list of axioms and rules. However, we formalized a proof for the equivalence between both axiomatizations in the context of a multi-agent epistemic logic, as in recent years several authors have been developing topological semantics for notions of knowledge and belief \cite{2019TopoEvidMulti,BaltagEtAl2016JustTopoBel,BaltagOzgun2013topobelief}, where this result is often briefly mentioned and applied, but not proved in detail. Nonetheless, it is also worth noticing here that the relational semantics of S4 is no more than a particular case for the topological semantics, as one can assign a binary relation to each topological space by defining what is known as the \emph{specialization order} \cite{Aiello2003}.

\section{Formalization} \label{sec_formalization}

We now consider the epistemic logic based on the axioms in Table \ref{tab_axioms_Stal} and the results in Proposition \ref{prop_knowledge_belief_logics} for the knowledge fragment of the language. We prove the soundness and completeness of the pure epistemic logic obtained from this system with respect to all frames consisting of weakly directed pre-orders by combining and applying the results formalized in \cite{From2021} with some auxiliary results provided in the \textit{Utility} section of our Isabelle theory. This allows us to utilize the canonical model strategy to prove completeness of the obtained system. We do not formalize a logic for both knowledge and belief, since we aimed to work on top of the formalization in \cite{EpistemicLogicAFP}, which considers modalities only for knowledge.
Formalizing the whole logic for both knowledge and belief will then require developing a new theory almost from scratch that includes modalities for both notions.

In order to do this, we prove first some intermediate results towards the completeness of the system obtained by adding axiom .2 to the system K, also known as system G in the literature \cite{chagrov1997modal}, including some results about the underlying propositional logic. This system is known to be complete with respect to the class of weakly directed frames, and, although we do not formalize this result completely, we do formalize a version of the Truth lemma for this system, which is needed so that we can combine it with the results for system S4 formalized in \cite{From2021} to achieve our goal of formalizing the completeness result for the logic S4.2 with respect to all weakly directed pre-orders.

\subsection{Rewriting propositional and modal formulas} \label{sec_rewritting}

In the deductive system formalized in \cite{EpistemicLogicAFP}, deduction from a set of premises is defined as follows: given a set of formulas $\Gamma \cup \{\phi\}$, we say that ``$\phi$ is derived from $\Gamma$'' (denoted $\Gamma \vdash \phi$) if there are formulas $\psi_1, \ldots, \psi_k$ in $\Gamma$ such that the formula $\psi_1 \to (\psi_2 \to \ldots (\psi_k \to \phi))$ is a theorem in the system, where $k$ is a non-negative integer. It is well-known that this formula is logically equivalent to $(\psi_1 \wedge \ldots \wedge \psi_k) \to \phi$ in classical propositional logic, thus the notion can be defined by requiring the latter to be a theorem in the system instead. Being able to translate between these two equivalent formulas in our formal deductive system plays an important role for the proof of our main result, thus we provided a formalization of several results of this kind in the \textit{Utility} section of our Isabelle theory, which includes some results that were not used later but that might become handy for the development of the formalizations of other related theories in the future.

Similarly to the \textbf{imply} function in \cite{EpistemicLogicAFP}, which produces, from a list of formulas $[\psi_1, \ldots, \psi_k]$ and a formula $\phi$, the formula $\psi_1 \to (\psi_2 \to \ldots (\psi_k \to \phi))$, we introduce the function \textbf{conjunct}, which takes a list of formulas $[\psi_1, \ldots, \psi_k]$ and produces the formula $\psi_1 \wedge \ldots \wedge \psi_k \wedge \top$. (Notice that the input may be an empty list, in which case the output is $\top$.) We formalized some results about logical equivalences, and derived rules and maximal consistent sets regarding the \textbf{imply} and \textbf{conjunct} functions that are well-known for the logic K, some of which are presented in the following lemmas. These are required to prove in section \ref{sec_Ax2} that the axiom .2 induces the weakly directed property on all frames that satisfy it, following \cite{lectureNotesStalnaker}. We include the proofs for those lemmas that require elaborated arguments.

\begin{lemma}[Derived rules]
For all formulas $\psi_1, \ldots, \psi_k, \phi$, it is the case that $\vdash (\psi_1 \wedge \ldots \wedge \psi_k) \to \phi$ if and only if $\vdash \psi_1 \to (\psi_2 \to \ldots (\psi_k \to \phi))$.
\end{lemma}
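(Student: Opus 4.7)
The plan is to derive this equivalence directly from the two lemmas \texttt{K-imply-conjunct} and \texttt{K-conjunct-imply} already established in the excerpt. Those lemmas state exactly the two implications that make up this biconditional, modulo the fact that the function \texttt{conjunct} produces $\psi_1 \wedge \cdots \wedge \psi_k \wedge \top$ rather than $\psi_1 \wedge \cdots \wedge \psi_k$. So after unfolding definitions, both directions of the claim reduce to closing this small syntactic gap between the trailing $\wedge \top$ and the plain conjunction.

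First I would set $G = [\psi_1,\ldots,\psi_k]$, so that the iterated implication is \texttt{imply} $G\,\phi$ and $\texttt{conjunct } G = \psi_1 \wedge \cdots \wedge \psi_k \wedge \top$. For the forward direction, assuming $\vdash (\psi_1 \wedge \cdots \wedge \psi_k) \to \phi$, I would observe that the propositional tautology
\[
\bigl((\psi_1 \wedge \cdots \wedge \psi_k) \to \phi\bigr) \longrightarrow \bigl((\psi_1 \wedge \cdots \wedge \psi_k \wedge \top) \to \phi\bigr)
\]
is a tautology, and can therefore be introduced in the deductive system through axiom A1; combining this with Modus Ponens (R1) yields $\vdash (\texttt{conjunct } G) \to \phi$, and then \texttt{K-conjunct-imply} gives $\vdash \texttt{imply } G\,\phi$, which is the desired iterated implication. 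For the backward direction, starting from $\vdash \texttt{imply } G\,\phi$, I would apply \texttt{K-imply-conjunct} to obtain $\vdash (\texttt{conjunct } G) \to \phi$, and then use A1 and R1 on the opposite tautology
\[
\bigl((\psi_1 \wedge \cdots \wedge \psi_k \wedge \top) \to \phi\bigr) \longrightarrow \bigl((\psi_1 \wedge \cdots \wedge \psi_k) \to \phi\bigr)
\]
to conclude $\vdash (\psi_1 \wedge \cdots \wedge \psi_k) \to \phi$.

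There is no real obstacle here: the substantive work of lifting the propositional equivalence between \texttt{imply} and \texttt{conjunct} into the modal deductive system has already been done in \texttt{imply-conjunct}, \texttt{conjunct-imply}, \texttt{K-imply-conjunct}, and \texttt{K-conjunct-imply}. The only point requiring minor care is the base case $k = 0$, where the informal conjunction $\psi_1 \wedge \cdots \wedge \psi_k$ becomes an empty conjunction that must be interpreted as $\top$ in order to match $\texttt{conjunct } [\,] = \top$; this is handled uniformly by the A1$/$R1 step above. In Isabelle I would expect this lemma to close by a single tactic call such as \texttt{by (meson A1 R1 K-imply-conjunct K-conjunct-imply)}.
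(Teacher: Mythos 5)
Your proposal is essentially the paper's own proof. In the formalization this lemma simply \emph{is} the pair \texttt{K\_imply\_conjunct} / \texttt{K\_conjunct\_imply}: the conjunction $\psi_1 \wedge \ldots \wedge \psi_k$ in the prose statement is by definition $\texttt{conjunct}\,[\psi_1,\ldots,\psi_k] = \psi_1 \wedge \cdots \wedge \psi_k \wedge \top$, so there is no trailing-$\top$ gap to bridge and your extra A1/R1 step is harmless but unnecessary. Because you cite those two lemmas as already established, what remains of your argument is really just the observation that the statement is already formalized; the actual content of the paper's proof --- which you do correctly identify at the end --- is the pair of tautologies \texttt{imply\_conjunct} and \texttt{conjunct\_imply}, proved by induction on the list $[\psi_1,\ldots,\psi_k]$, then lifted into the deductive system by A1 and discharged against the hypothesis by R1. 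If the task is to prove the lemma rather than to locate it, you must supply that induction yourself instead of appealing to \texttt{K\_imply\_conjunct} and \texttt{K\_conjunct\_imply}, on pain of circularity; with that substitution your argument coincides with the paper's.
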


\begin{lemma}[Logical equivalences]
The following two lemmas capture the fact that in system K, hence in any normal modal logic, the formulas $(K_i \psi_1 \wedge \ldots \wedge K_i \psi_k)$ and $K_i (\psi_1 \wedge \ldots \wedge \psi_k)$ are equivalent, for any finite set of formulas $\psi_1,\ldots, \psi_k$ and any agent $i$.
\end{lemma}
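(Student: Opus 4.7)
The plan is to prove the two directions separately, each by induction on the length of the list $\psi_1,\ldots,\psi_k$ (equivalently, on the list $G$), bootstrapping from the binary versions $\vdash K_i(p\wedge q)\to (K_i p\wedge K_i q)$ and $\vdash (K_i p\wedge K_i q)\to K_i(p\wedge q)$. Both directions also need the earlier utility lemmas that convert between the \textbf{conjunct} form and the \textbf{imply} form (\texttt{K\_imply\_multi}, \texttt{K\_multi\_imply}), together with transitivity of implication inside the logic (\texttt{K\_imp\_trans}).

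First I would dispatch the binary cases. For $\vdash K_i(p\wedge q)\to K_i p\wedge K_i q$, start from the tautologies $(p\wedge q)\to p$ and $(p\wedge q)\to q$, push each one under $K_i$ by \texttt{R2} and an application of the axiom K (i.e.\ the derived \texttt{K\_map} rule), obtaining $\vdash K_i(p\wedge q)\to K_i p$ and $\vdash K_i(p\wedge q)\to K_i q$; then combine them using \texttt{K\_imply\_multi}. For the converse $\vdash (K_i p\wedge K_i q)\to K_i(p\wedge q)$, start from the tautology $p\to q\to(p\wedge q)$, apply \texttt{R2} to put it under $K_i$, distribute through with the K-axiom twice (via \texttt{K\_A2'} and \texttt{K\_imp\_trans}), yielding $\vdash K_i p\to K_i q\to K_i(p\wedge q)$, and finally reshape the two antecedents into a conjunction via \texttt{K\_multi\_imply}.

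With the binary distribution in hand, the inductive arguments are routine. For $\vdash K_i(\text{conjunct}\,G)\to \text{conjunct}(\text{map}\,(K_i)\,G)$, induct on $G$: the base case $G=[\,]$ reduces to $\vdash K_i \top \to \top$, which is immediate from \texttt{A1}; for $G=a\#G'$, by definition of \textbf{conjunct} the left-hand side is $K_i(a\wedge \text{conjunct}\,G')$, which the binary lemma splits into $K_i a\wedge K_i(\text{conjunct}\,G')$, and the inductive hypothesis together with \texttt{K\_imp\_trans} turns the second conjunct into $\text{conjunct}(\text{map}\,(K_i)\,G')$, producing exactly $\text{conjunct}(\text{map}\,(K_i)\,(a\#G'))$. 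The converse direction proceeds symmetrically: the base case again is immediate (using \texttt{R2} applied to the tautology $\top$), and in the inductive step the binary `out' lemma combines $K_i a$ with $K_i(\text{conjunct}\,G')$ (the latter obtained from the inductive hypothesis) into $K_i(a\wedge \text{conjunct}\,G')$.

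I do not expect any real obstacle, only tedious bookkeeping: the mildly fiddly part will be shuffling between the \textbf{imply}/\textbf{conjunct} shapes of multi-antecedent formulas in the inductive step of the `out' direction, since the inductive hypothesis is naturally stated with $\text{conjunct}$ but the available K-rules (especially \texttt{R1} and the multi-premise combinators) speak of implication chains. This is exactly what the utility lemmas from the preceding \emph{Rewriting propositional and modal formulas} subsection were designed to handle, so in Isabelle the two sub-proofs will collapse into short \texttt{induct} scripts closed by \texttt{metis}/\texttt{meson} calls citing \texttt{K\_conjunction\_in}, \texttt{K\_conjunction\_out}, \texttt{K\_imp\_trans}, \texttt{K\_imply\_multi}, and \texttt{K\_multi\_imply}.
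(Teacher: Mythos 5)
Your proposal is correct and follows essentially the same route as the formalization: the binary lemmas \texttt{K\_conjunction\_in} and \texttt{K\_conjunction\_out} are derived from exactly the tautologies you name (via \texttt{K\_map}, \texttt{R2}, \texttt{K\_A2'}, \texttt{K\_imp\_trans}, \texttt{K\_imply\_multi}, \texttt{K\_multi\_imply}), and the \texttt{\_mult} versions then follow by induction on the list $G$ with the same base cases and the same use of \texttt{K\_imp\_trans} in the Cons step. Even your anticipated friction point — shuffling between the \textbf{imply} and \textbf{conjunct} shapes in the inductive steps — is precisely where the paper's proof resorts to \texttt{metis} calls citing \texttt{K\_imply\_Cons} and \texttt{K\_multi\_imply}.
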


\begin{lemma}[Closure under conjunctions for MCSs]
The following lemma proves that maximal consistent sets of formulas are closed under conjunctions, that is, if $\Gamma$ is a maximal consistent set of formulas and $\psi_1, \ldots, \psi_k$ are some formulas in $\Gamma$, then $\psi_1 \wedge \ldots \wedge \psi_k \in \Gamma$.
\end{lemma}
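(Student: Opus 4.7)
The plan is to prove this by induction on the list $S$, leveraging the two-formula version \texttt{mcs\_conjunction} already established just above as the key building block for the inductive step.

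For the base case where $S = []$, we have $\operatorname{conjunct}\, [] = \top$ by the definition of \texttt{conjunct}. I need to show $\top \in V$. Since $V$ is maximal, either $\top \in V$ or $\neg \top \in V$; the latter would contradict consistency, because $\neg \top$ is propositionally refutable (via \texttt{K\_Boole} applied to the tautology $\neg \top \to \bot$, together with \texttt{inconsistent\_subset} and the maximality assumption). This is exactly the kind of bookkeeping the Isabelle \texttt{metis} call is doing with \texttt{K\_Boole}, \texttt{consistent\_def}, \texttt{inconsistent\_subset}, and \texttt{maximal\_def}.

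For the inductive step with list $a \# S$, I assume as induction hypothesis that $\operatorname{set} S \subseteq V \wedge \operatorname{finite}(\operatorname{set} S) \longrightarrow \operatorname{conjunct} S \in V$. Now suppose $\operatorname{set}(a \# S) \subseteq V$ and the set is finite. Then $a \in V$ and $\operatorname{set} S \subseteq V$, so by the induction hypothesis $\operatorname{conjunct} S \in V$. By \texttt{mcs\_conjunction} applied to $a$ and $\operatorname{conjunct} S$ (both of which are in $V$, and using the assumptions that $V$ is consistent and maximal), we get $(a \wedge \operatorname{conjunct} S) \in V$. But $\operatorname{conjunct}(a \# S) = a \wedge \operatorname{conjunct} S$ by the defining equation of \texttt{conjunct}, so we conclude $\operatorname{conjunct}(a \# S) \in V$.

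There is no serious obstacle here: the entire argument is a routine list induction once \texttt{mcs\_conjunction} is available, and all the nontrivial reasoning (closing $V$ under conjunction of two known members, and getting $\top \in V$) has either been done or reduces to standard maximal-consistent-set closure facts (\texttt{deriv\_in\_maximal}, \texttt{consequent\_in\_maximal}, \texttt{consistent\_def}, \texttt{maximal\_def}). The only mildly delicate point is the base case, where one has to notice that $\top$ being derivable in any consistent set forces $\top \in V$ by maximality; but this is dispatched by a single automated call.
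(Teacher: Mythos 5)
Your proposal matches the paper's proof essentially line for line: the formalization proceeds by induction on the list, dispatches the \texttt{Nil} case with exactly the combination of \texttt{K\_Boole}, \texttt{conjunct.simps(1)}, \texttt{consistent\_def}, \texttt{inconsistent\_subset}, and \texttt{maximal\_def} you describe, and handles the \texttt{Cons} case by extracting $a \in V$ and $\operatorname{conjunct} S \in V$ from the induction hypothesis and closing under the binary \texttt{mcs\_conjunction}. No gaps; this is the same argument.
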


\begin{figure}[h!]
    \centering
    \begin{tikzpicture}[ scale=.7,
    ne1/.style={rectangle, draw=black, fill=blue!15},
    ne2/.style={rectangle, draw=black, fill=orange!25},
    ne3/.style={rectangle, draw=black, fill=white},
    de1/.style={rounded rectangle, draw=black, fill=blue!15},
    de2/.style={rounded rectangle, draw=black, fill=orange!15},
    de3/.style={rounded rectangle, draw=black, fill=white} ]
      \node[de1] (e1) {\texttt{Extend}};
      \node[] (y2) [below=of e1] {};
      \node[ne1] (e2) [right=of e1] {\texttt{maximal\_extension}};
      \node[] (x3) [right=of e2] {};
      \node[] (y1) [below=of e2] {};
      \node[de1] (e3) [right=of x3] {\texttt{valid}};
      \node[] (x6) [below=of e3] {};
      \draw[dashed, red!50!gray, thick, opacity=.7] (-1,-.7) -- (13,-.7);
      \draw[dashed, red!50!gray, thick, opacity=.7] (-1,-.7) -- (-2.7,1.3);
      \node[gray] (el) at (10.2,1) {\tt Epistemic\_Logic.thy};
      \node[de1] (a1) [right=of y1] {\texttt{Ax\_2}};
      \node[] (x2) [left=of a1] {};
      \node[ne1] (a2) [below=of a1] {\texttt{Ax\_2\_weakly\_directed}};
      \node[ne1] (a3) [below=of a2] {\texttt{mcs\_2\_weakly\_directed}};
      \node[] (x1) [left=of a2] {};
      \node[ne1] (a4) [below=of a3] {\texttt{imply\_completeness\_K\_2}};
      \node[] (x4) [left=of a4] {};
      \node[ne3] (a5) [left=of x4] {\texttt{soundness\_Ax\_2}};
      \draw[dashed, red!50!gray, thick, opacity=.7] (-6,-9.7) -- (13,-9.7);
      \node[gray] (al) at (4.6,-1.4) {\tt Ax.2 Section};
      \node[ne1] (u1) [left=.1cm of y2] {\texttt{K\_conj\_imply\_factor}};
      \node[] (y2) [below=of u1] {};
      \draw[dashed, red!50!gray, thick, opacity=.7] (-6,-2.7) -- (0.5,-2.7);
      \node[gray] (el) at (-4,-.5) {\tt Utility Section};
      \node[ne1] (t1) [right=-.1cm of y2] {\texttt{topoS4\_S4}};
      \node[ne3] (t2) [below=of t1] {\texttt{main$_{\texttt{S4'}}$}};
      \draw[dashed, red!50!gray, thick, opacity=.7] (-6,-7.3) -- (0.5,-7.3);
      \draw[dashed, red!50!gray, thick, opacity=.7] (0.5,-.7) -- (0.5,-7.3);
      \node[gray] (el) [text width=2.5cm] at (-4.2,-5.5) {\tt Topological S4 Axioms Section};
      \node[] (x5) [below=of a4] {};
      \node[de1] (s1) [right=of x5] {$\vdash_{\texttt{S42}}$};
      \node[] (x7) [below=of s1] {};
      \node[de1] (s2) [below=of a5] {\tt AxS4\_2};
      \node[ne3] (s3) [below=of s2] {\tt soundness\_AxS4\_2};
      \node[ne3] (s4) [below=of s3] {\tt soundness$_{\texttt{S42}}$};
      \node[ne1] (s5) [below=of s1] {\tt imply\_completeness\_S4\_4};
      \node[ne1] (s6) [below=of s5] {\tt completeness$_{\texttt{S42}}$};
      \node[de2] (s7) [right=of s4] {\tt valid$_{\texttt{S42}}$};
      \node[ne3] (s8) [below=of s6] {\tt main$_{\texttt{S42}}$};
      \node[gray] (sl) [text width=2.5cm] at (-4,-14.3) {\tt System S4.2 Section};
      \draw[->] (e1.south west) -- (u1.north);
      \draw[->] (e1.east) -- (e2.west);
      \draw[->] (e1.south east) to[out=300, in=150, looseness=1] (a4.west);
      \draw[->] (u1.south) -- (t1.north);
      \draw[->] (t1.south) -- (t2.north);
      \draw[->] (e2.east) to[out=340, in=40, looseness=1.2] (a2.north);
      \draw[->] (a1.south) -- (a2.north);
      \draw[->] (a1.west) to[out=185, in=40, looseness=1.3] (a5.north east);
      \draw[->] (a1.east) to[out=350, in=45, looseness=1.3] (s1.north east);
      \draw[->] (a1.west) to[out=200, in=15, looseness=1.2] (s2.east);
      \draw[->] (a2.south) -- (a3.north);
      \draw[->] (a3.south) -- (a4.north);
      \draw[->] (a5.south) to[out=205, in=105, looseness=1.3] (s3.north west);
      \draw[->] (s2.south) -- (s3.north);
      \draw[->] (s3.south) -- (s4.north);
      \draw[->] (s1.south west) -- (s4.north east);
      \draw[->] (s1.south) -- (s5.north);
      \draw[->] (e1.south east) to[out=320, in=150, looseness=1.2] (s5.west);
      \draw[->] (a3.south east) to[out=320, in=60] (s5.north);
      \draw[->] (e3.south) to[out=280, in=45, looseness=1] (s5.north);
      \draw[->] (s5.south) -- (s6.north);
      \draw[->] (s7.south east) -- (s8.north west);
      \draw[->] (s4.south) to[out=310, in=180, looseness=0.5] (s8.west);
      \draw[->] (s6.south) -- (s8.north);
    \end{tikzpicture}
    \caption{Dependency graph showing the main results and the definitions, abbreviations, and intermediate results from their proofs that require the countable type condition. The dotted lines and the gray text show the files or sections of the Isabelle theory corresponding to our formalization where these can be found. Definitions and abbreviations appear in rounded rectangles, whereas lemmas and theorems appear in rectangles. Those that explicitly mention the countability condition are colored in blue, and the color orange means that this result is applied using the set of natural numbers to label the agents.}
    \label{fig:dependency}
\end{figure}
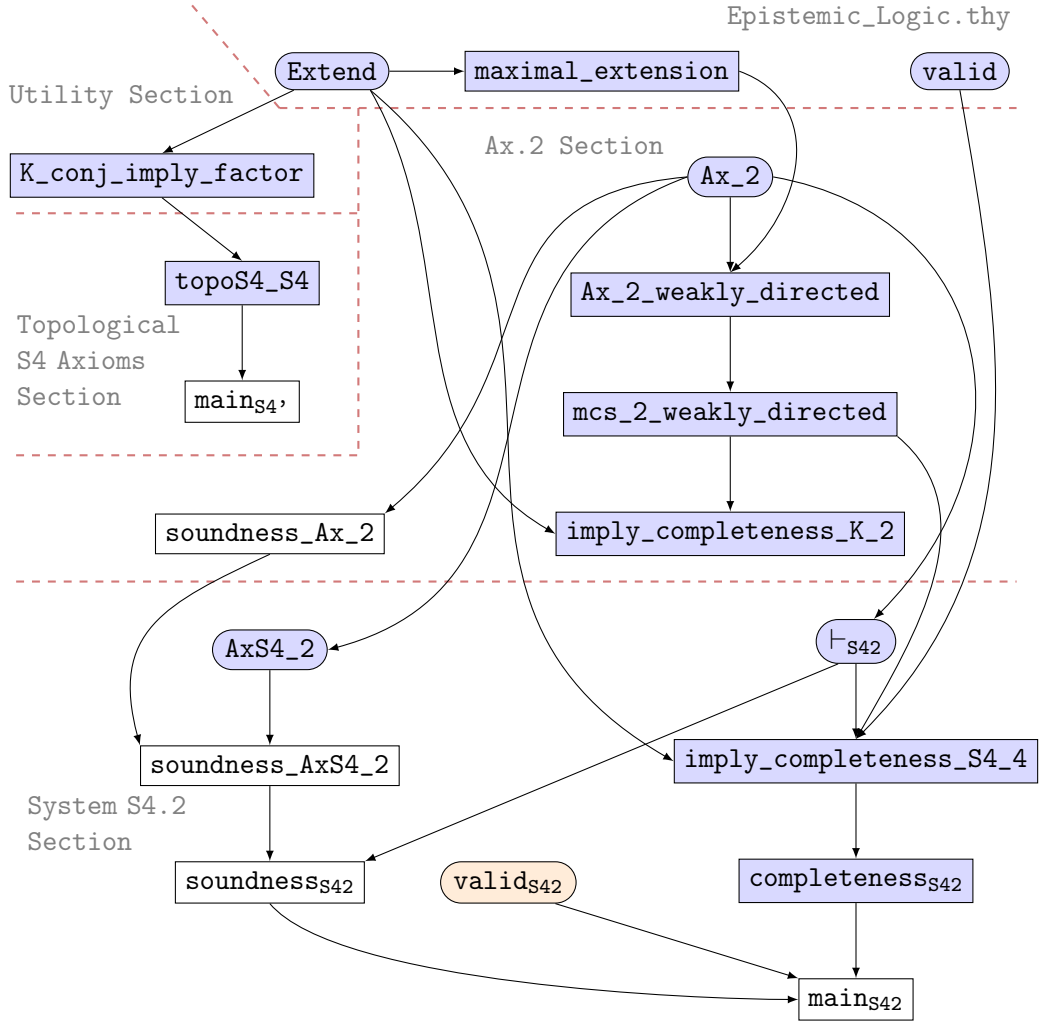

\begin{lemma}
    For all formulas $\phi,\psi,\theta$, it is the case that
    \[ \vdash ((K_i \phi \wedge K_i \psi) \to \theta) \to (K_i (\phi \wedge \psi) \to \theta) \]
    for any agent label $i$, as long as the type of $i$ is countable.
    \begin{isabelle}
      \SnippetPart{0}{2}{lemma:K-conj-imply-factor}
    \end{isabelle}
\end{lemma}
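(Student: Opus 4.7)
The plan is to reduce the statement to the already-established fact that $A \vdash K_i(p \wedge q) \to (K_i p \wedge K_i q)$ --- call this fact $(*)$ --- combined with straightforward propositional reasoning. Once $(*)$ is in hand, observe that for arbitrary $\alpha, \beta, \gamma$, the schema
$$(\alpha \to \beta) \to ((\beta \to \gamma) \to (\alpha \to \gamma))$$
is a propositional tautology (transitivity of implication). Instantiating $\alpha := K_i(p \wedge q)$, $\beta := (K_i p \wedge K_i q)$, and $\gamma := r$ and invoking axiom \texttt{A1}, this instance is derivable from $A$. Two applications of \texttt{R1} (modus ponens), first with $(*)$ and then with the hypothesis form on the left, then yield the desired conclusion $A \vdash ((K_i p \wedge K_i q) \to r) \to (K_i(p \wedge q) \to r)$.

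So the real work is establishing $(*)$. The cleanest route is to simply cite \texttt{K\_conjunction\_in}, which appears earlier in the excerpt and already proves $A \vdash K_i(p \wedge q) \to (K_i p \wedge K_i q)$ with no countability assumption on $A$; with this citation, the whole lemma follows in a single tactic step combining \texttt{A1}, \texttt{R1}, and \texttt{K\_conjunction\_in}. Alternatively, and this is the route actually taken in the Isabelle theory (hence the \texttt{countable} constraint on the index type), one can prove $(*)$ semantically via the canonical model: assume for contradiction $\neg (A \vdash K_i(p \wedge q) \to (K_i p \wedge K_i q))$, so the singleton $\{\neg(K_i(p\wedge q) \to (K_i p \wedge K_i q))\}$ is consistent; use \texttt{maximal\_extension} (which needs countability to enumerate formulas in a Lindenbaum-style construction) to build a maximal consistent $V$; form the canonical model $M$ with worlds the maximal consistent sets and accessibility relation \texttt{reach}; by the usual truth lemma / \texttt{canonical\_model}, $M,V \models K_i(p \wedge q) \wedge \neg(K_i p \wedge K_i q)$; without loss of generality $M,V \models \neg K_i p$, so there is some $U \in \mathcal{K}\,M\,i\,V$ with $M,U \not\models p$, which contradicts $K_i(p\wedge q) \in V$ forcing $p \wedge q \in U$.

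The main obstacle in the semantic route is the usual bookkeeping around the canonical model: one has to extract a concrete witness world $U$ from $\neg K_i p$ at $V$, which needs the existence of a maximal consistent extension of $\operatorname{known}\,V\,i \cup \{\neg p\}$ inside \texttt{reach} --- exactly where countability enters, and where auxiliary lemmas such as \texttt{dual\_reach} and \texttt{exactly\_one\_in\_maximal} need to be threaded in. If the goal is to keep the hypothesis generic over $A$ and avoid countability, I would strongly prefer the direct syntactic route via \texttt{K\_conjunction\_in}; if matching the development style of the rest of the Ax.2 section (which routinely invokes canonical models) is preferable, the semantic route is the natural fit and accounts for the \texttt{countable} type constraint appearing in the statement.
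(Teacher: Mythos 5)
Your proposal matches the paper's proof: the paper establishes $A \vdash K_i(p\wedge q)\to(K_ip\wedge K_iq)$ by exactly the semantic canonical-model contradiction argument you describe (consistency of the negated singleton, \texttt{Extend}, the truth lemma, and a clash between every reachable world satisfying $p\wedge q$ and some reachable world satisfying $\neg p\vee\neg q$ --- which is where countability enters), and then closes with the transitivity-of-implication tautology via \texttt{A1} and \texttt{R1}, just as you say. One small correction: \texttt{K\_conjunction\_in} appears \emph{after} this lemma in the theory rather than before it, so your preferred syntactic shortcut would require reordering the development (though it would not be circular, since \texttt{K\_conjunction\_in} is proved independently of this lemma, and it would indeed remove the countability hypothesis).
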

The assumption over the set of agent labels for the previous lemma is imposed by the proof that was formalized for it, as it relies on the proof for the completeness of K in \cite{EpistemicLogicAFP}, which requires this condition, as depicted in Figure \ref{fig:dependency}.

Additionally, we formalize the following lemma, which plays a significant role in the proof of the completeness result for Stalnaker's epistemic logic that follows \cite{lectureNotesStalnaker}.
\begin{lemma}
  Given any pair of formulas $\phi, \psi$, $(K_i \phi \wedge L_i \psi) \to L_i (\phi \wedge \psi)$ is a theorem in system K, for every agent label $i$.
\end{lemma}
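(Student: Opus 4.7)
The plan is to derive this entirely inside system K, mimicking the pattern used in the immediately preceding lemmas of the excerpt (the \texttt{K\_thm} proof in particular, which already fits this template). Unfolding the abbreviation $L_i \chi := \neg K_i (\neg \chi)$, the target is
\[
\vdash (K_i \phi \wedge \neg K_i(\neg \psi)) \to \neg K_i(\neg(\phi \wedge \psi)),
\]
and by propositional contraposition on the implication that sits between the two knowledge-of-negation formulas it suffices to derive
\[
\vdash K_i \phi \to \bigl(K_i(\neg(\phi \wedge \psi)) \to K_i(\neg \psi)\bigr),
\]
and then reassemble. I will first establish this implicational version, then flip it with the \texttt{duality\_taut} lemma and repackage with \texttt{K\_multi\_imply}.

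Concretely, I would proceed as follows. First, observe that $\phi \to (\neg(\phi \wedge \psi) \to \neg \psi)$ is a propositional tautology, so by axiom A1 it is a theorem of our system. Apply necessitation (R2) and axiom K (in the form \texttt{K\_A2'} already used elsewhere in the file) twice to pull the outer $K_i$ inside both implications, yielding
\[
\vdash K_i \phi \to \bigl(K_i(\neg(\phi \wedge \psi)) \to K_i(\neg \psi)\bigr).
\]
This is the same maneuver \texttt{K\_thm} performs, so the justification should reduce to \texttt{simp add: K\_map} plus \texttt{meson K\_A2'}, possibly with a \texttt{K\_imp\_trans} to chain the two inner applications of K.

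Next, I would invoke \texttt{duality\_taut}, which states the propositional tautology
\[
((K_i p \to K_i (\neg q)) \to (L_i q \to \neg K_i p)),
\]
together with \texttt{K\_imp\_trans} to rearrange the formula into
\[
\vdash K_i \phi \to \bigl(L_i \psi \to L_i(\phi \wedge \psi)\bigr),
\]
noting that $L_i(\phi\wedge \psi) = \neg K_i(\neg(\phi\wedge \psi))$ and $L_i \psi = \neg K_i(\neg \psi)$ by definition, so that the contrapositive of the inner implication is precisely what \texttt{duality\_taut} gives us after instantiation with $p := \phi$ and $q := \phi \wedge \psi$ on one side, and $q := \psi$ on the other. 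Finally, apply \texttt{K\_multi\_imply} to convert the curried form $K_i \phi \to (L_i \psi \to L_i(\phi \wedge \psi))$ into the conjunctive form $(K_i \phi \wedge L_i \psi) \to L_i(\phi \wedge \psi)$.

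The main obstacle I anticipate is stitching together the duality step cleanly: the lemma \texttt{duality\_taut} is stated for a fixed shape and instantiating it so that both $K_i(\neg(\phi\wedge\psi))$ and $K_i(\neg \psi)$ end up in the right polarity may require an intermediate propositional step (a small auxiliary tautology bridging $(A \to (B \to C))$ with $(A \to (\neg C \to \neg B))$). In a structured Isar proof this is a one-line \texttt{metis} invocation citing \texttt{AK.simps}, \texttt{K\_imp\_trans}, and \texttt{duality\_taut}, exactly as is done in step 13 of the \texttt{K\_thm} proof earlier in the excerpt, so I would expect to essentially reuse that idiom and then close with \texttt{simp add: K\_multi\_imply}.
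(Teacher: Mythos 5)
Your proposal is correct and follows essentially the same route as the paper's proof: derive the tautology $\phi \to (\neg(\phi\wedge\psi)\to\neg\psi)$, lift it under $K_i$ via necessitation and axiom K, chain with \texttt{K\_imp\_trans}, contrapose the inner implication using \texttt{duality\_taut} to obtain $K_i\phi \to (L_i\psi \to L_i(\phi\wedge\psi))$, and finish with \texttt{K\_multi\_imply}. No gaps.
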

\begin{proof}
    Notice that, for any formulas $\phi$ and $\psi$, $\vdash \phi \to (\neg (\phi \wedge \psi) \to \neg \psi)$, hence 
    \[ \vdash K_i \phi \to K_i (\neg (\phi \wedge \psi) \to \neg \psi). \] 
    On the other hand, we have that $\vdash K_i (\neg (\phi \wedge \psi) \to \neg \psi) \to ((K_i \neg (\phi \wedge \psi)) \to K_i \neg \psi)$, thus 
    \[ \vdash K_i \phi \to ((K_i \neg (\phi \wedge \psi)) \to K_i \neg \psi). \] 
    This implies that $\vdash K_i \phi \to (L_i \psi \to L_i (\phi \wedge \psi))$, which is equivalent to 
    \[ \vdash (K_i \phi \wedge L_i \psi) \to L_i (\phi \wedge \psi). \]
\end{proof}

\subsection{Axiom .2} \label{sec_Ax2}

We formalize axiom schema .2, which when added to the axioms and rules of system K imposes a structure on the canonical model, namely, we obtain a weakly directed frame. For this, the \textbf{inductive} command lets us define the axiom schema in such a way that $i$ and $p$ can be instantiated at will, as long as the type for the agents labels is countable.
\begin{isabelle}
  \Snippet{inductive:Ax-2}
\end{isabelle}

A frame $\mathcal{F} = (W,(R)_{i\in I})$ is said to be \textit{weakly directed} if whenever $vR_i w$ and $vR_i u$, there exists $x\in W$ such that $wR_i x$ and $uR_i x$, for each $i \in I$. Accordingly, we formalize this property for Kripke frames as follows:
\begin{isabelle}
  \Snippet{definition:weakly-directed}
\end{isabelle}

The soundness of axiom schema .2 with respect to weakly directed frames is formalized in our Isabelle theory, and it follows from the definitions for the semantics and the weakly directed property. However, proving that the property holds for the canonical model when adding the axiom to a normal modal logic is non-trivial. Unlike the frame properties imposed by the axioms considered in the Epistemic Logics formalized in \cite{From2021}, which are all universal properties, this property is universal-existential, so to prove that the canonical model has this property means that one has to show the existence of a possible world satisfying a property under some assumptions.

Recall that the \emph{canonical frame}, $\mathcal{F}^{\operatorname{can}} = \left( W^{\operatorname{can}}, (R_i^{\operatorname{can}})_{i\in I} \right)$, consists of the set all maximal consistent sets of formulas (with respect to K+.2) as the set of possible worlds, $W^{\operatorname{can}}$, and the accessibility relations $R_i^{\operatorname{can}}$ are defined as follows:
\[ \Gamma R_i^{\operatorname{can}} \Delta \quad \text{ iff } \quad \{ \phi:\, K_i \phi \in \Gamma\} \subseteq \Delta, \]
for each agent $i$. Thus, showing that the canonical frame for a system including axiom .2 is weakly directed involves verifying that if we have $\{ \phi:\, K_i \phi \in \Gamma\} \subseteq \Delta_1$ and $\{ \phi:\, K_i \phi \in \Gamma\} \subseteq \Delta_2$ for some maximal consistent sets $\Gamma$, $\Delta_1$ and $\Delta_2$, then there exists a maximal consistent set $\Theta$ such that $\{ \phi:\, K_i \phi \in \Delta_1 \} \cup \{ \phi:\, K_i \phi \in \Delta_2 \} \subseteq \Theta$. We capture this in our formalization by the following lemma.

\begin{lemma}[Weakly directed property and the axiom .2] \label{lemma_wdp_Ax2}
    Suppose that $V,U,W$ are three maximal consistent sets with respect to a normal modal logic containing the axiom .2. If $V R_i^{\operatorname{can}} U$ and $V R_i^{\operatorname{can}} W$, then there exists a maximal consistent set $X$ such that $U R_i^{\operatorname{can}} X$ and $W R_i^{\operatorname{can}} X$.
\end{lemma}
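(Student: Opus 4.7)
The plan is to prove the lemma by contradiction. Suppose no such maximal consistent set $X$ exists. The strategy, following the argument in Stalnaker's lecture notes, is to extract from this failure a concrete derivation of inconsistency that contradicts the fact that $V$ itself is consistent, via a judicious application of axiom .2. The key insight is that, by Lindenbaum's lemma (\texttt{maximal\_extension} in the formalization), the non-existence of $X$ means that the set $S = \operatorname{known}(U,i) \cup \operatorname{known}(W,i)$ is itself inconsistent, since any consistent set extends to a maximal one.

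First I would apply the compactness property packaged in the definition of consistency to obtain a finite list $S' \subseteq S$ with $\vdash \operatorname{imply}\, S'\, \bot$. I would then partition $S'$ into two sublists by filtering: $?U = \operatorname{filter}\,(\lambda p.\, p \in \operatorname{known}(U,i))\, S'$ and $?W = \operatorname{filter}\,(\lambda p.\, p \in \operatorname{known}(W,i))\, S'$. Using the rewriting lemmas from Section~\ref{sec_rewritting} (\texttt{K\_imply\_weaken}, \texttt{K\_imply\_conjunct}, \texttt{K\_multi\_imply}), I would transform $\vdash \operatorname{imply}\, S'\, \bot$ into $\vdash (?p \wedge ?q) \to \bot$, where $?p = \operatorname{conjunct}\, ?U$ and $?q = \operatorname{conjunct}\, ?W$. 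By \texttt{mcs\_conjunction\_mult}, $?p \in U$ and $?q \in W$, and by \texttt{K\_conjunction\_out\_mult} combined with \texttt{deriv\_in\_maximal}, the facts $K_i ?p \in U$ and $K_i ?q \in W$ follow from the structure of the filters (since every element of $?U$ was of the form $K_i \varphi$ for some $\varphi \in U$, and analogously for $?W$).

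Next, since $U, W \in \operatorname{reach}\, A\, i\, V$, I would use \texttt{exactly\_one\_in\_maximal} (the dual/reach correspondence) to conclude that $L_i K_i\, ?p \in V$ and $L_i K_i\, ?q \in V$. Now I would apply axiom .2, which for our chosen formula reads $\vdash \neg K_i \neg K_i\, ?p \to K_i \neg K_i \neg\, ?p$, i.e., $\vdash L_i K_i\, ?p \to K_i L_i\, ?p$. Via \texttt{ax\_in\_maximal} and \texttt{consequent\_in\_maximal}, this yields $K_i L_i\, ?p \in V$. Combined with $K_i\, ?q \in W$ and the fact that $W$ is reached from $V$, I would apply \texttt{K\_thm} to derive $L_i(L_i\, ?p \wedge\, ?q) \in V$, and hence (after rearranging the conjunction) $L_i L_i(?q \wedge\, ?p) \in V$. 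On the other hand, necessitation applied twice to the propositional theorem $\vdash (?p \wedge\, ?q) \to \bot$ gives $\vdash K_i K_i((?p \wedge\, ?q) \to \bot)$, and so this formula is in $V$.

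The final step, which I expect to be the main bookkeeping obstacle, is unpacking these two facts to reach an explicit contradiction. From $L_i L_i(?q \wedge\, ?p) \in V$, the dual-reach lemma produces some $X_1 \in \operatorname{reach}\, A\, i\, V$ with $L_i(?q \wedge\, ?p) \in X_1$. From $K_i K_i((?p \wedge\, ?q) \to \bot) \in V$, we get $K_i((?p \wedge\, ?q) \to \bot) \in X_1$. Using the propositional tautology $(x \wedge y \to \bot) \to \neg(y \wedge x)$ and the rule RM (monotonicity under $K_i$, together with the necessitation form of propositional reasoning inside $X_1$), we obtain $K_i \neg(?q \wedge\, ?p) \in X_1$, which by definition means $\neg L_i(?q \wedge\, ?p) \in X_1$. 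By \texttt{exactly\_one\_in\_maximal}, this contradicts the membership of $L_i(?q \wedge\, ?p)$ in $X_1$. The main challenge here is the careful commutativity handling of the conjunction $?p \wedge\, ?q$ versus $?q \wedge\, ?p$ and the precise propagation through the modal scopes; the difficulty is not conceptual but rather a matter of correctly orchestrating \texttt{K\_map}, \texttt{K\_thm}, and the maximality closure lemmas so that Isabelle's automation can discharge the final metis/smt step.
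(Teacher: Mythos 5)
Your proposal follows essentially the same route as the paper's proof: reduce the non-existence of $X$ to the inconsistency of $\operatorname{known}(U,i)\cup\operatorname{known}(W,i)$, extract a finite witness and split it by filtering, establish $K_i\,?p\in U$ and $K_i\,?q\in W$, lift these to $L_iK_i\,?p,\ L_iK_i\,?q\in V$, apply axiom .2 and \texttt{K\_thm} to place $L_iL_i(?q\wedge ?p)$ in $V$, and derive the contradiction against the doubly-necessitated inconsistency inside a reached world $X_1$ --- exactly the structure of the formalized argument. The only slips are cosmetic: \texttt{mcs\_conjunction\_mult} yields $\operatorname{conjunct}(\operatorname{map}\,(K\,i)\,?U)\in U$ rather than $?p\in U$ (elements of $?U$ need not lie in $U$ without axiom T), and the formula produced by \texttt{K\_thm} in $V$ is $L_i(L_i\,?p\wedge K_i\,?q)$ rather than $L_i(L_i\,?p\wedge ?q)$; neither affects the argument, since you immediately proceed to the correct facts $K_i\,?p\in U$, $K_i\,?q\in W$, and $L_iL_i(?q\wedge ?p)\in V$.
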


\begin{proof}
First, fix a set of formulas $A$ and three maximal consistent sets of formulas $V,U,W$ (with respect to $A$) satisfying the lemma assumptions for some agent label $i$ of a countable type. Assume towards contradiction that such a set $X$ does not exist, then
    \[ S := \{ \phi:\, K_i \phi \in W\} \cup \{ \phi:\, K_i \phi \in U\} \]
    has to be inconsistent with respect to $A$, hence there are formulas $\theta_1, \ldots, \theta_k \in \{ \phi:\, K_i \phi \in U\}$ and $\psi_1, \ldots, \psi_m \in \{ \phi:\, K_i \phi \in W\}$, for some $k,m \in \mathbb{N}$, such that
    \[ A \vdash (\alpha \wedge \beta) \to \bot, \]
    where $\alpha = \theta_1 \wedge \ldots \wedge \theta_k$ and $\beta = \psi_1 \wedge \ldots \wedge \psi_m$. This implies that $A \vdash K_i K_i (\neg (\alpha \wedge \beta))$, since $\phi \to \bot$ is equivalent to $\neg \phi$ for every formula $\phi$, by applying the Necessitation rule twice. By definition, we have that $K_i \theta_1, \ldots, K_i \theta_k \in U$ and $K_i \psi_1, \ldots, K_i \psi_m \in W$, thus
    \[ K_i \theta_1 \wedge \ldots \wedge K_i \theta_k \in U \quad \text{ and } \quad K_i \psi_1 \wedge \ldots \wedge K_i \psi_m \in W, \]
    since these sets are closed under logical consequences. We then use the logical equivalences and properties for maximal consistent sets from the \emph{Utility section} (see section \ref{sec_rewritting}) to obtain that $K_i \alpha \in U$ and $K_i \beta \in W$. This implies that the formulas $L_i K_i \alpha$ and $L_i K_i \beta$ are elements of $V$, and so is the formula $K_i L_i \alpha$, since $V$ contains every instance of axiom .2 and is closed under logical consequences. This implies that $(L_i K_i \beta) \wedge (K_i L_i \alpha) \in V$, thus $L_i (K_i \beta \wedge L_i \alpha) \in V$, so there exists a maximal consistent set $Z$ such that
    \[ V R_i^{\operatorname{can}} Z \qquad \text{and} \qquad  K_i \beta \wedge L_i \alpha \in Z. \]
    Applying the lemma \textit{K-thm} we get that $L_i (\beta \wedge \alpha) \in Z$, but notice that $K_i \neg (\alpha \wedge \beta) \in Z$, thus $K_i \neg (\beta \wedge \alpha) \in Z$, which is a contradiction because we have found a formula $\phi$ such that $\phi, \neg \phi \in Z$. 
\end{proof}

Note that we have restricted ourselves to countable types for the set of agent labels in formalization of the previous two lemmas, as we are only allowed to extend a consistent set into a maximal one when the language is countable, because of a dependency shown in Figure \ref{fig:dependency}. Unlike in the respective result for each normal modal logic formalized in \cite{From2021}, this restriction to a countable type was necessary as we were dealing with a universal-existence property and not with a purely universal one. We then prove a version of the Truth lemma for the minimal normal modal logic that includes axiom .2, which is the relevant result about this system that will allow us to prove the completeness result for system S4.2 in the next section.

\begin{lemma}[Imply completeness for Axiom .2] \label{lemma_imply_completeness_Ax2}
Let $\Gamma \cup \{ \phi\}$ be a set of formulas. Suppose that, for all weakly directed Kripke structures $M$, $M,w \models \Gamma$ implies $M,w \models \phi$, for each world $w \in M$. Then there are formulas $\gamma_1, \gamma_2, \ldots, \gamma_m \in \Gamma$ such that
\[ \vdash_{.2} \gamma_1 \to (\gamma_2 \to \ldots (\gamma_m \to \phi)\ldots). \]
\end{lemma}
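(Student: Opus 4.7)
The plan is to prove this by contradiction, adapting the standard Henkin-style completeness argument to the case of weakly directed frames, using Lemma \ref{lemma_wdp_Ax2} as the central ingredient. First I would assume that no such finite sequence $\gamma_1, \ldots, \gamma_m \in \Gamma$ exists, i.e., for every finite list $qs$ of formulas from $\Gamma$, one has $\nvdash_{.2} \operatorname{imply}(qs, \phi)$. By standard reasoning (via \texttt{K\_Boole} and the definition of consistency in terms of derivability of $\bot$), this entails that the set $S := \{\neg \phi\} \cup \Gamma$ is consistent with respect to the axiom system K+.2.

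Next I would extend $S$ to a maximal consistent set $V$ using the \texttt{Extend} construction with the enumeration \texttt{from\_nat}, which is legitimate because the type of agent labels is countable. I would then take the canonical model $M = \operatorname{Kripke}(\operatorname{mcss}(\text{Ax\_2}), \pi, \operatorname{reach}(\text{Ax\_2}))$ whose worlds are the maximal consistent sets and whose accessibility relations are given by the \texttt{reach} predicate. Invoking the existing \texttt{canonical\_model} lemma, $V$ lies in $\mathcal{W}(M)$, is maximal and consistent, and satisfies $M, V \models \neg \phi$ as well as $M, V \models q$ for every $q \in \Gamma$.

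The key structural step is to check that $M$ is in fact weakly directed. This is precisely the content of Lemma \ref{lemma_wdp_Ax2} (formalized as \texttt{Ax\_2\_weakly\_directed}), lifted from a statement about arbitrary triples of maximal consistent sets related by \texttt{reach} to the definitional \texttt{weakly\_directed} predicate on the canonical Kripke structure (the \texttt{mcs\_2\_weakly\_directed} lemma does exactly this packaging). Once the canonical model is known to be weakly directed, the assumption \texttt{valid} applies to it at the point $V$, yielding $M, V \models \phi$, which directly contradicts $M, V \models \neg \phi$.

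The main obstacle is not any of the bookkeeping around consistency and the canonical model (all of that is already provided by the base theory), but rather the earlier Lemma \ref{lemma_wdp_Ax2} on which this completeness result pivots: translating the universal--existential weak directedness property into a purely syntactic fact about maximal consistent sets, via the dual \texttt{reach}/\texttt{dual\_reach} correspondences and the $K_i \phi \wedge L_i \psi \to L_i(\phi \wedge \psi)$ theorem. Assuming that lemma is in hand, the proof of \texttt{imply\_completeness\_K\_2} is a short \texttt{ccontr}-style argument that chains \texttt{K\_Boole}, \texttt{maximal\_extension}, \texttt{canonical\_model}, and \texttt{mcs\_2\_weakly\_directed} together, ending with a contradiction between $M,V \models p$ and $M,V \models \neg p$.
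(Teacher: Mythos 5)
Your proposal is correct and follows essentially the same route as the paper's formalized proof: a \texttt{ccontr} argument that derives consistency of $\{\neg\phi\}\cup\Gamma$ via \texttt{K\_Boole}, extends it to a maximal consistent set with \texttt{Extend}/\texttt{from\_nat}, places it in the canonical model via \texttt{canonical\_model}, establishes weak directedness of that model by \texttt{mcs\_2\_weakly\_directed}, and closes with the contradiction between $M,V\models\phi$ and $M,V\models\neg\phi$. You also correctly locate the real mathematical content in the earlier Lemma~\ref{lemma_wdp_Ax2}, which is exactly why the paper omits this proof as routine.
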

We omit the proof for this lemma, since it follows the same strategy as the correspondent ones for the systems formalized in \cite{EpistemicLogicAFP}.

\subsection{System S4.2}

We define system S4.2 as the one obtained by adding to system K the axioms T, 4 and .2, by making use of the abbreviation $\oplus$ introduced in \cite{EpistemicLogicAFP} that allows combining axiom predicates:
\begin{isabelle}
  \Snippet{abbreviation:SystemS4-2}
\end{isabelle}


Recall that axioms T and 4 impose reflexivity and transitivity on the canonical frame, respectively \cite{Blackburn2001}, which was formalized in \cite{EpistemicLogicAFP}. This implies that the composition of these two with axiom .2 imposes all three conditions on the canonical frame, which leads to the soundness and completeness of S4.2 with respect to all weakly directed pre-orders. To prove the completeness result, we prove first the analog results to Lemmas \ref{lemma_wdp_Ax2} and \ref{lemma_imply_completeness_Ax2} but for S4.2 and Kripke models based on weakly directed preorders.

\begin{lemma}[S4.2 and Weakly directed preorders]
Let $\Gamma \cup \{\phi\}$ be a set of formulas. Suppose that, for all countable Kripke structures $M$ based on weakly directed preorders, $M,w \models \Gamma$ implies $M,w \models \phi$, for each world $w \in M$. Then, there are formulas $\gamma_1, \ldots, \gamma_m \in \Gamma$ such that
\[ \vdash_{S42} \gamma_1 \to ( \ldots \to (\gamma_m \to \phi) \ldots ). \]

\begin{isabelle}
  \SnippetPart{0}{3}{lemma:imply-completeness-S4-2}
\end{isabelle}

This implies that if a formula is valid in all countable Kripke structures based on weakly directed preorders, then it is a theorem in S4.2.
\begin{isabelle}
  \SnippetPart{0}{2}{lemma:completenessS42}
\end{isabelle}
\end{lemma}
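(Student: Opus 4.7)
The plan is to prove this by contradiction using the Henkin-style canonical-model argument, mirroring the strategy used for \texttt{imply\_completeness\_K\_2} but now applied to the stronger axiom system \texttt{AxS4\_2 = AxT} $\oplus$ \texttt{Ax\_4} $\oplus$ \texttt{Ax\_2}. Suppose for contradiction that no finite list $qs$ of formulas drawn from $G$ satisfies $\texttt{AxS4\_2} \vdash \texttt{imply}\ qs\ p$. Using the derived rule \texttt{K\_Boole} (relating derivability of $\bot$ from a list to derivability of its conclusion), this gives: for every list $qs$ with $\operatorname{set} qs \subseteq G$, we have $\texttt{AxS4\_2} \not\vdash \texttt{imply}\ ((\neg p)\#qs)\ \bot$. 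Equivalently, the set $S := \{\neg p\} \cup G$ is consistent with respect to \texttt{AxS4\_2}.

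Next I would build the canonical model for \texttt{AxS4\_2}. Let $V := \texttt{Extend\ AxS4\_2}\ S\ \texttt{from\_nat}$ be the maximal consistent extension of $S$ (this requires the countable type assumption on the agent labels, which is part of the context). Let $M := \texttt{Kripke}\ (\texttt{mcss\ AxS4\_2})\ \pi\ (\texttt{reach\ AxS4\_2})$ be the canonical model. From the general \texttt{canonical\_model} lemma (already available from \texttt{Epistemic\_Logic.thy}) together with the consistency of $S$, I obtain that $V$ is a consistent maximal set, $M, V \models \neg p$, and $M, V \models q$ for every $q \in G$.

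The key step — and the main obstacle — is to verify that $M$ is a \texttt{w\_directed\_preorder}, i.e.\ reflexive, transitive, and weakly directed. Each conjunct is induced by one of the three axiom families in \texttt{AxS4\_2}: reflexivity comes from \texttt{mcs}$_T$\texttt{\_reflexive} (axiom \texttt{T}), transitivity from \texttt{mcs}$_{K_4}$\texttt{\_transitive} (axiom \texttt{4}), and weak directedness from the nontrivial \texttt{mcs}$_{\_2}$\texttt{\_weakly\_directed} established earlier in the paper (which is where the universal-existential character of axiom \texttt{.2} was handled, via \texttt{Ax\_2\_weakly\_directed}). The monotonicity of these frame-correspondence lemmas in the axiom set lets them all apply to \texttt{AxS4\_2}, since \texttt{AxS4\_2} extends each of \texttt{AxT}, \texttt{Ax\_4}, and \texttt{Ax\_2}.

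Finally, with $M$ shown to be a weakly directed preorder, $V \in \mathcal{W}\,M$, and $M, V \models q$ for all $q \in G$, the validity hypothesis gives $M, V \models p$. This contradicts $M, V \models \neg p$, closing the proof. The routine corollary \texttt{completeness}$_{S4_2}$ then follows by instantiating $G = \emptyset$ and unfolding \texttt{imply} on the empty list. I expect all routine derivation steps to be discharged automatically by \texttt{fastforce}/\texttt{blast}/\texttt{metis} after unfolding \texttt{list\_all\_def}, exactly parallel to the \texttt{imply\_completeness\_K\_2} proof; the only genuinely new content consumed is the three \texttt{mcs} frame-correspondence lemmas, combined via a single \texttt{simp add:} step.
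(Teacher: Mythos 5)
Your proposal is correct and follows essentially the same route as the paper's proof: contradiction via \texttt{K\_Boole}, consistency of $\{\neg p\}\cup G$, extension to a maximal consistent set in the canonical model for \texttt{AxS4\_2}, establishing \texttt{w\_directed\_preorder} by combining \texttt{mcs}$_T$\texttt{\_reflexive}, \texttt{mcs}$_{K_4}$\texttt{\_transitive}, and \texttt{mcs}$_{.2}$\texttt{\_weakly\_directed}, and then deriving the contradiction from the validity hypothesis. The corollary via $G=\emptyset$ is also exactly how the paper concludes.
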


Our main result follows: the completeness of S4.2 with respect to all frames consisting of weakly directed pre-orders.
\begin{theorem}[Completeness of S4.2]
A formula is valid in all countable Kripke structures based on weakly directed preorders if and only if it is a theorem in S4.2.
\begin{isabelle}
  \SnippetPart{0}{0}{theorem:mainS42}
\end{isabelle}
\end{theorem}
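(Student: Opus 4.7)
The plan is to prove the biconditional by combining the two directions that have already been established as separate lemmas in the preceding development. Specifically, the left-to-right direction (soundness) is precisely $\texttt{soundness}_{\texttt{S4}_2}$, which states that any $\vdash_{S42} p$ holds in every Kripke model $M$ satisfying the \texttt{w\_directed\_preorder} predicate at every $w \in \mathcal{W}\,M$. The right-to-left direction (completeness) is precisely $\texttt{completeness}_{\texttt{S4}_2}$, which was derived as a specialization of $\texttt{imply\_completeness\_S4\_2}$ to the empty context of hypotheses $G = \{\}$.

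The first step would be to unfold the abbreviation $\texttt{valid}_{\texttt{S4}_2}$, which quantifies universally over all Kripke models over the countable type $\texttt{nat}$ and valuations of the form $\texttt{nat fm set}$; this matches the shape of the hypothesis required by $\texttt{completeness}_{\texttt{S4}_2}$, whose statement asks for validity across all countable-indexed Kripke structures based on weakly directed preorders. From soundness we get one implication directly, since $\texttt{valid}_{\texttt{S4}_2}$ specializes the universally quantified conclusion of $\texttt{soundness}_{\texttt{S4}_2}$ to the countably-agent-labelled instance. From completeness we get the reverse implication without further work.

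Because both lemmas have already been discharged, no new semantic or syntactic manipulation is necessary here; the theorem is a corollary of the conjunction of the two lemmas. In Isabelle terms, a single \texttt{by fast} or \texttt{by blast} invocation citing \texttt{soundness}$_{\texttt{S4}_2}$ and \texttt{completeness}$_{\texttt{S4}_2}$ should suffice for the automated prover to close the goal.

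The only potential obstacle would be a small mismatch in type-class or sort constraints between the statements of soundness, completeness, and the abbreviation $\texttt{valid}_{\texttt{S4}_2}$: the abbreviation is specialized to agent labels of type $\texttt{nat}$, whereas the two auxiliary lemmas are phrased for an arbitrary countable type. Since $\texttt{nat}$ is countable, this instantiation is sound, but one may need to let Isabelle infer the instantiation rather than leave it implicit. If \texttt{fast} fails on that account, a slightly more explicit invocation \texttt{using soundness}$_{\texttt{S4}_2}$ \texttt{completeness}$_{\texttt{S4}_2}$ \texttt{by blast} (or with explicit \texttt{[where 'i = nat]} instantiation) should close the proof.
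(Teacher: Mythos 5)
Your proposal matches the paper's proof exactly: the theorem is discharged by \texttt{using} the lemmas \texttt{soundness}$_{\texttt{S4}_2}$ and \texttt{completeness}$_{\texttt{S4}_2}$ \texttt{by fast}, with the \texttt{nat} instantiation of the countable agent type handled automatically. No further commentary is needed.
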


\section{An alternative axiomatization for System S4} \label{sec_topoS4}

Inspired by the last section of \cite{EpistemicLogicAFP}, we formalize an alternative axiomatization for System S4 that is often used when dealing with the \textit{topological semantics} \cite{Aiello2003} for modal operators and we show its equivalence to the one considered in \cite{From2021}.
We formalize the system corresponding to the axioms and rules in Table \ref{tab_axioms_TopoS4} in Isabelle, which we call topoS4, and, if a formula $\phi$ is a theorem in this system, we denote this by $\vdash_{Top} \phi$.
\begin{isabelle}
  \Snippet{inductive:System-topoS4}
\end{isabelle}

To show that this formulation is equivalent to the one in \cite{From2021} (which is based on \cite{Blackburn2001}), we provide derivations of axiom K and the Necessitation Rule (from $\phi$ deduct $\square \phi$). This is enough as our system already includes axioms T and 4 in the same fashion as in \cite{From2021}, and is based on the same propositional logic.
\begin{lemma}
For all formulas $\phi$ and $\psi$, $\vdash_{top} (K_i \phi \wedge K_i (\phi \to \psi)) \to K_i \psi$ and, if $\vdash_{top} \phi$, then $\vdash_{top} K_i \phi$.
\end{lemma}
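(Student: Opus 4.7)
The plan is to derive both claims using the monotonicity rule RM together with the aggregation axiom AR and the propositional machinery supplied by A1' and R1'. Both derivations are short; the key observation is that every place where one would normally invoke the full Necessitation rule can be routed through RM combined with either AN or a propositional preparation step.

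For the K-axiom $\vdash_{Top} (K_i \phi \wedge K_i(\phi \to \psi)) \to K_i \psi$, I would first take the propositional tautology $(\phi \wedge (\phi \to \psi)) \to \psi$, which is a theorem by A1'. Applying RM lifts this under the modal operator to give $K_i(\phi \wedge (\phi \to \psi)) \to K_i \psi$. Separately, the left-to-right direction of the aggregation axiom AR, namely $(K_i \phi \wedge K_i(\phi \to \psi)) \to K_i(\phi \wedge (\phi \to \psi))$, is obtained by extracting one implication of the biconditional via a propositional conjunction-elimination tautology (this is the role played by \texttt{topoS4\_conjElim} and \texttt{System\_topoS4.simps} in the formalization). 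Chaining these two implications via the propositional transitivity of implication (itself a theorem by A1', i.e.\ \texttt{topoS4\_trans}) and R1' yields the desired formula.

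For the Necessitation rule, assume $\vdash_{Top} \phi$. I would first derive $\vdash_{Top} \top \to \phi$: the propositional tautology $\phi \to (\top \to \phi)$ is available via A1', and R1' applied to this and the hypothesis gives the result. Applying RM then yields $\vdash_{Top} K_i \top \to K_i \phi$. Finally, AN supplies $\vdash_{Top} K_i \top$, so one more use of R1' gives $\vdash_{Top} K_i \phi$.

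The only subtle point is the absence of a primitive Necessitation rule: one must be careful that RM is always applicable, which is why the detour through $\top \to \phi$ and AN is required in the second part, and why the K-axiom is reached by lifting a propositional tautology rather than by a direct modal derivation. Beyond this bookkeeping, the proof is a routine assembly of the axioms and rules of topoS4, and mirrors the structure of the preceding lemmas \texttt{topoS4\_AxK} and \texttt{topoS4\_NecR} in the formalization, from which the final equivalence with the axiomatization of \cite{From2021} will follow.
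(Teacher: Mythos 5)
Your proposal is correct and follows essentially the same route as the paper's proof: lift the tautology $(\phi \wedge (\phi\to\psi))\to\psi$ with RM and close the gap using the left-to-right half of AR plus propositional transitivity for the K-axiom, and detour through $\top\to\phi$, RM, and AN for Necessitation. You even make explicit the AR step that the paper's prose glosses over with ``which implies that,'' so nothing is missing.
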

\begin{proof}
	For the first part, notice that $\vdash_{top} (\phi \wedge (\phi \to \psi)) \to \psi$, since it is an instance of a propositional tautology. Then, we apply RM to obtain that $\vdash_{top} K_i (\phi \wedge (\phi \to \psi)) \to K_i \psi$, which implies that $\vdash_{top} (K_i \phi \wedge K_i (\phi \to \psi)) \to K_i \psi$. For the second one, suppose that $\vdash_{top} \phi$ and notice that $\vdash_{top} \phi \to (\top \to \phi)$, hence $\vdash_{top} \top \to \phi$. Applying RM we get that $\vdash_{top} K_i \top \to K_i \phi$, thus $\vdash_{top} K_i \phi$.
\end{proof}
From this it then follows that any formula derivable in the classical S4 system (denoted $\vdash_{S4}$) can be derived in our system as well.
\begin{lemma}
All S4 theorems are theorems in topoS4.
\begin{isabelle}
  \SnippetPart{0}{0}{lemma:S4-topoS4}
\end{isabelle}
\end{lemma}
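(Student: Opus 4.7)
The plan is to perform structural induction on the derivation of $\vdash_{S4} p$, using the induction principle of the inductive predicate \texttt{AK} that defines S4. Recall that the formalized S4 system is $K \oplus AxT \oplus Ax4$, so the induction unfolds into one case per K-constructor (propositional tautology \texttt{A1}, axiom K as \texttt{A2}, Modus Ponens \texttt{R1}, Necessitation \texttt{R2}), together with a single \texttt{Ax} case grouping the two S4-specific axiom schemas \texttt{AxT} and \texttt{Ax4}, plus the boilerplate cases for the datatype constructors of formulas (propositional variables, $\bot$, and the Boolean connectives).

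The tautology case is immediate from \texttt{A1'}, which embeds every propositional tautology straight into topoS4. The axiom K case is discharged by the already-established lemma \texttt{topoS4\_AxK}, which derives the standard distribution axiom inside topoS4 from the rule \texttt{RM} and the axiom \texttt{AR}. Modus Ponens corresponds directly to \texttt{R1'}, feeding it the two translated premises delivered by the induction hypotheses. Necessitation is handled by the previously proven \texttt{topoS4\_NecR}, which lifts any theorem $\vdash_{Top} p$ to $\vdash_{Top} K_i p$ by combining \texttt{AN} with \texttt{RM}.

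The only subtlety is the \texttt{Ax} case, which bundles the two S4-specific schemas together. Here I would split according to whether the axiom arises from \texttt{AxT} (an instance of $K_i p \to p$, matching \texttt{AT'}) or from \texttt{Ax4} (an instance of $K_i p \to K_i K_i p$, matching \texttt{A4'}). A \texttt{metis} call citing \texttt{AT'}, \texttt{A4'}, \texttt{AxT.cases}, and \texttt{Ax4.cases} should discharge both subcases uniformly. Finally, all remaining housekeeping cases for the formula constructors collapse under a single \texttt{meson} invocation fed with \texttt{System\_topoS4.intros}.

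I anticipate the principal obstacle to be merely a bookkeeping one: ensuring the \texttt{Ax} case is split correctly into its \texttt{AxT} and \texttt{Ax4} constituents, since these are separately defined inductive predicates combined via the $\oplus$ operator. No genuine modal-logical difficulty arises, because every primitive S4 rule already has a direct counterpart or a previously established derived counterpart inside topoS4.
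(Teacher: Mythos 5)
Your proposal is correct and follows essentially the same route as the paper's proof: rule induction via \texttt{AK.induct}, with the axiom-K case discharged by \texttt{topoS4\_AxK}, necessitation by \texttt{topoS4\_NecR}, the bundled \texttt{Ax} case split against \texttt{AT'} and \texttt{A4'} using \texttt{AxT.cases} and \texttt{Ax4.cases}, and the remaining rule cases closed by \texttt{meson System\_topoS4.intros}. The only slight inaccuracy is that rule induction on the derivation produces cases only for the inference rules of \texttt{AK} (tautology, K, \texttt{Ax}, MP, necessitation), not for the formula datatype constructors, but this does not affect the argument.
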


The converse follows by a similar argument, we show that axioms and rules from our system are all derivable in $\vdash_{S4}$, under the condition that there are only countably many agents.
\begin{lemma}
All theorems in topoS4 are theorems in S4, assuming that there are only countably many agents.
\begin{isabelle}
  \SnippetPart{0}{2}{lemma:topoS4-S4}
\end{isabelle}
\end{lemma}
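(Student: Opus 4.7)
The plan is to induct on the derivation witnessing $\vdash_{Top} p$, i.e.\ on the inductive rules of \texttt{System\_topoS4}, showing that in every case the conclusion is also derivable in $\vdash_{S4}$. Since \texttt{System\_topoS4} has seven clauses (A1', AR, AT', A4', AN, R1', RM), there are seven cases; most are immediate, and the interesting work is concentrated in AR and RM.

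For A1' the conclusion is a propositional tautology, so the S4 axiom A1 applies directly. The cases AT' ($K_i p \to p$) and A4' ($K_i p \to K_i K_i p$) discharge by invoking the Ax constructor of S4 together with the appropriate introduction rules for the \texttt{AxT} and \texttt{Ax4} axiom sets. The rule R1' is just modus ponens, which matches the S4 rule R1. For AN, I would first note that $\vdash_{S4} \top$ (as a tautology via A1), then apply the necessitation rule R2 to obtain $\vdash_{S4} K_i \top$.

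The AR case is the one that needs a real argument: I must show $\vdash_{S4} \left(K_i(p \wedge q) \to (K_i p \wedge K_i q)\right) \wedge \left((K_i p \wedge K_i q) \to K_i(p \wedge q)\right)$. The left-to-right direction is exactly \texttt{K\_conjunction\_in}, and the right-to-left direction is exactly \texttt{K\_conjunction\_out}, both already established earlier in the development. To package them as a biconditional inside the deductive calculus, I would combine them using \texttt{K\_imply\_multi} to conjoin the two implications, after which \texttt{K\_conj\_imply\_factor} (or purely propositional reshuffling via A1 and R1) delivers the desired $\leftrightarrow$-formula. This is precisely where the countability hypothesis enters: \texttt{K\_conj\_imply\_factor} is proved via the canonical-model argument, which requires the type of agent indices to be countable. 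Finally, for RM, from the induction hypothesis $\vdash_{S4} (p \to q)$ I apply R2 to get $\vdash_{S4} K_i(p \to q)$, and then combine with \texttt{K\_map} (i.e.\ axiom K plus modus ponens) to conclude $\vdash_{S4} (K_i p \to K_i q)$.

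The main obstacle is the AR case, both because it is the only axiom in \texttt{System\_topoS4} that is not already literally an axiom or derived rule of S4 and because it is where the countability assumption becomes essential; every other case is either a direct reuse of an S4 axiom/rule or a one-step application of necessitation. Once AR is handled, the induction closes uniformly, and the lemma follows.
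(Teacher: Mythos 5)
Your proposal is correct and follows essentially the same route as the paper: induction over the \texttt{System\_topoS4} derivation, with the trivial cases discharged by the corresponding S4 axioms and rules, AN handled by $\vdash_{S4}\top$ plus necessitation, RM by necessitation plus \texttt{K\_map}, and AR assembled from \texttt{K\_conjunction\_in}, \texttt{K\_conjunction\_out}, \texttt{K\_imply\_multi} and \texttt{K\_conj\_imply\_factor}. You also correctly locate the countability assumption in \texttt{K\_conj\_imply\_factor}, whose canonical-model proof is exactly why the hypothesis appears in this lemma.
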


By combining the last two results with the main result for S4 in \cite{From2021}, we obtain formalized soundness and completeness for this alternative axiomatization of S4 over the class of S4 frames, namely, all reflexive and transitive frames.
\begin{theorem}[Soundness and Completeness of topoS4] A formula is valid in all S4 Kripke models if and only if it is a theorem in topoS4.
\begin{isabelle}
  \SnippetPart{0}{0}{theorem:mainS4'}
\end{isabelle}
\end{theorem}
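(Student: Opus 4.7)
The plan is to chain together three already-established results: the main completeness theorem for the classical S4 axiomatization (main$_{S4}$) from the base \texttt{Epistemic\_Logic} entry, together with the two just-proved inter-derivability lemmas \texttt{S4\_topoS4} and \texttt{topoS4\_S4}. Since \texttt{main}$_{S4}$ asserts that $\text{valid}_{S4}\,p$ is equivalent to $\vdash_{S4} p$, it suffices to show that $\vdash_{S4} p$ and $\vdash_{\mathit{Top}} p$ are equivalent; but that is exactly the content of the two lemmas, one for each direction.

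More concretely, I would structure the proof as a single bi-implication proved by a one-line Isabelle tactic (along the lines of \texttt{using main}$_{S4}$ \texttt{ S4\_topoS4 topoS4\_S4 by blast}). Unfolding: for the forward direction, assume $\text{valid}_{S4}\,p$; by \texttt{main}$_{S4}$ this yields $\vdash_{S4} p$, and then \texttt{S4\_topoS4} immediately gives $\vdash_{\mathit{Top}} p$. For the reverse, assume $\vdash_{\mathit{Top}} p$; apply \texttt{topoS4\_S4} to obtain $\vdash_{S4} p$, and then invoke \texttt{main}$_{S4}$ to conclude $\text{valid}_{S4}\,p$. No additional lemmas, no induction, and no new semantic arguments are required at this stage.

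The only subtlety, and the step worth flagging, is the typing context. The lemma \texttt{topoS4\_S4} requires the type of agent labels to be countable (this condition is inherited through \texttt{K\_conj\_imply\_factor}, as indicated by the dependency graph in Figure~\ref{fig:dependency}), whereas \texttt{S4\_topoS4} is schematic. Consequently, the final theorem must be stated in a context where the agent type is fixed to be countable, matching the signature under which \texttt{main}$_{S4}$ is stated in the base entry; otherwise Isabelle's type inference will not accept the combination. Assuming this typing discipline is consistent with the surrounding theory (which it is, since every preceding result in the section has been formulated with \texttt{'i :: countable}), the proof goes through without obstacle, and the potentially delicate step---supplying a topoS4 derivation of axiom K and the Necessitation rule---has already been discharged in the preceding two lemmas.
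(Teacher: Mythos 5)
Your proposal matches the paper's proof exactly: the formalization discharges the theorem with \texttt{using main\isactrlsub S\isactrlsub 4 S4\_topoS4 topoS4\_S4 by blast}, i.e., chaining the classical completeness result for S4 with the two inter-derivability lemmas, just as you describe. Your remark about the countability constraint on the agent type inherited through \texttt{topoS4\_S4} is also consistent with how the surrounding theory is set up.
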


\section{Results, Discussion, and Future work} \label{sec_results_future_work}

We have formalized the soundness and completeness for Stalnaker's Epistemic Logic S4.2 with respect to the class of Kripke frames consisting of weakly-directed pre-orders for countably many agents, which has not been formalized before neither in Isabelle, nor in any other publicly available proof assistant. Additionally, the equivalence between the topological axiomatization of S4 and the one in \cite{EpistemicLogicAFP} is also described in this document. The proofs for the main result, as well as for many of the intermediate results, have been sketched before in multiple sources, but we were not able to find a unique source, making this the first work of its kind. Additionally, given the recent interest in applications of the topological semantics for epistemic modal operators \cite{2019TopoEvidMulti,BaltagEtAl2016JustTopoBel,BaltagOzgun2013topobelief}, some of which coincide with Stalnaker's epistemic logic, this provides a reinforcement for the foundations of these works.

We emphasize on the assumption of the cardinality of the set of agent labels, as it was necessary to impose such restriction even in some definitions in our formalization, thus creating a discrepancy with the definitions commonly found in the literature. We present in Figure \ref{fig:dependency} a summary of the definitions and results of our formalization and the one in \cite{EpistemicLogicAFP} that rely on this condition, since the formalization for the general strategy applied to obtain the completeness results in \cite{EpistemicLogicAFP} requires it to be able to obtain maximal consistent sets. Although, in theory, it is possible to provide an argument for the general case using Zorn's lemma (this was also later noted in \cite{From2021}), which is available in \cite{ZornLemma_Isabelle}. 

Further work in formalizing in Isabelle/HOL of different formal aspects of modal logics that include S4 operators, as is the case with many temporal logics like LTL with its \emph{always} operator, for which a complete axiomatization is already known \cite{Goldblatt1992}; as well as concrete examples of epistemic scenarios based on Stalnaker's principles, like the example detailed in \cite{koutras2015relating}. We hope that this work will facilitate further work in formalizing different logical systems in Isabelle/HOL.

\bibliographystyle{eptcs}
\bibliography{ref.bib}

\end{document}